\def\blfootnote{\gdef\@thefnmark{}\@footnotetext}
\title{Verification of PCP-Related \\ Computational Reductions in Coq}
\author{Yannick Forster \and Edith Heiter \and Gert Smolka}
\institute{Saarland University, Saarbr\"ucken, Germany\\
  \path|{forster,heiter,smolka}@ps.uni-saarland.de|}
\newcommand{\N}[1]{#1}
\begin{document}

\maketitle
\blfootnote{\textcopyright~ Springer International Publishing AG, part of
  Springer Nature 2018 \\
J. Avigad and A. Mahboubi (Eds.): ITP 2018, LNCS 10895, pp. 253–269,
2018.\\
The final authenticated version is available online at \url{https://doi.org/10.1007/978-3-319-94821-8_15}}\vspace{-5mm}
\begin{abstract}
  We formally verify several computational reductions concerning the
  Post correspondence problem (PCP) using the proof assistant Coq.
  Our verification includes a reduction of the halting problem for
  Turing machines to string rewriting, a reduction of string rewriting
  to PCP, and reductions of PCP to the intersection problem and the
  palindrome problem for context-free grammars.

  \keywords{Post Correspondence Problem, String Rewriting, Context-free
  Grammars, Computational Reductions, Undecidability,
  Coq}%
\end{abstract}\enlargethispage{1.5cm}%
\section{Introduction}

A problem $P$ can be shown undecidable
by giving an undecidable problem $Q$ and
a computable function reducing~$Q$ to~$P$.
There are well known reductions of the
halting problem for Turing machines (TM)
to the Post correspondence problem (PCP),
and of PCP to the intersection problem
for context-free grammars (CFI).
We study these reductions in the formal setting
of Coq's type theory~\cite{Coq} with the goal of
providing elegant correctness proofs.

Given that the reduction of TM to PCP
appears in textbooks~\cite{hopcroft,davis,sipser}
and in the  standard
curriculum for theoretical computer science,
one would expect that rigorous
correctness proofs can be found in the literature.
To our surprise, this is not the case.
Missing is the formulation of
the inductive invariants enabling the necessary inductive proofs to go through.
Speaking with the analogue of imperative programs,
the correctness arguments in the literature
argue about the correctness of programs with loops
without stating and verifying loop invariants.

By inductive invariants we mean statements that
are shown inductively and that generalise
the obvious correctness statements one starts with.
Every substantial formal correctness proof
will involve the construction of suitable
inductive invariants.
Often it takes ingenuity to generalise
a given correctness claim to one or several
inductive invariants that can be shown inductively.

It took some effort to come up with the
missing inductive invariants for the
reductions leading from TM to PCP.
Once we had the inductive invariants,
we had rigorous and transparent proofs
explaining the correctness of the reductions
in a more satisfactory way than
the correctness arguments we found in the literature.

Reduction of problems is transitive.
Given a reduction $P\preceq Q$ and a reduction $Q\preceq R$,
we have a reduction $P\preceq R$.
This way, complex reductions
can be factorised into simpler reductions.
Following ideas in the literature,
we will establish the reduction chain
$$
\text{TM}\preceq
\text{SRH}\preceq
\text{SR}\preceq
\text{MPCP}\preceq
\text{PCP}
$$
where TM is the halting problem of single-tape Turing machines, SRH is a generalisation of
the halting problem for Turing machines,
SR is the string rewriting problem,
and MPCP is a modified version of PCP fixing a first card.
The most interesting steps are
$\text{SR}\preceq\text{MPCP}$ and
$\text{MPCP}\preceq\text{PCP}$.

We also consider the intersection problem (CFI) and
the palindrome problem (CFP) for a class of
linear context-free grammars we call Post grammars.
CFP asks whether a Post grammar generates a palindrome,
and CFI asks whether for two Post grammars
there exists a string generated by both grammars.
We will verify reductions $\text{PCP}\preceq\text{CFI}$
and ${\text{PCP}\preceq\text{CFP}}$,
thus showing that CFP and CFI are both undecidable.

Coq's type theory provides an ideal setting
for the formalisation and verification of
the reductions mentioned.
The fact that all functions in Coq are total and computable
makes the notion of computable reductions straightforward.

The correctness arguments coming with our approach
are inherently constructive,
which is verified by the underlying constructive type theory.
The main inductive data types we use are numbers and lists,
which conveniently provide for
the representation of strings, rewriting systems,
Post correspondence problems, and Post grammars.

The paper is accompanied by a Coq development
covering all results of this paper.
The definitions and statements in the paper are
hyperlinked with their formalisations
in the HTML presentation of the Coq development
at \url{http://www.ps.uni-saarland.de/extras/PCP}.

\subsection*{Organisation}

We start with the necessary formal definitions
covering all reductions we consider in Section~2.
We then present each of the six reductions
and conclude with a discussion of the
design choices underlying our formalisations.
Sections 3 to 8 on the reductions are independent
and can be read in any order.

We only give definitions for the problems and do not discuss the
underlying intuitions, because all problems are covered in a typical
introduction to theoretical computer science and the interested reader
can refer to various textbooks providing good intuitions,
e.g.~\cite{hopcroft,sipser,davis}.

\subsection*{Contribution}

Our reduction functions follow the ideas in the literature. The main
contributions of the paper are the formal correctness proofs for the
reduction functions.  Here some ingenuity and considerable elaboration
of the informal arguments in the literature were needed.  As one would
expect, the formal proofs heavily rely on inductive techniques.  In
contrast, the informal proof sketches in the literature do not
introduce the necessary inductions (in fact, they don't even mention
inductive proofs).  To the best of our knowledge, the present paper is
the first paper providing formal correctness proofs for basic
reductions to and from PCP.

\section{Definitions}
\setCoqFilename{Definitions}

Formalising problems and computable reductions
in constructive type theory is straightforward.
A \emph{problem} consists of a type $X$
and a unary predicate $p$ on $X$,
and a \emph{reduction} of $(X,p)$ to $(Y,q)$
is a function $f:X\to Y$ such that
$\forall x.~px\toot q(fx)$.
Note that the usual requirement
that $f$ is total and computable
can be dropped since it is satisfied
by every function in a constructive type theory.
We write \emph{$p\preceq q$} and say
that \emph{$p$ reduces to $q$}
if a reduction of $(X,p)$ to $(Y,q)$ exists.

\begin{fact}[][reduces_transitive]
  If $p\preceq q$ and $q\preceq r$, then $p\preceq r$.
\end{fact}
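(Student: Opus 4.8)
The plan is to exploit the fact that, by definition, a reduction is nothing more than a function equipped with a biconditional correctness property, so transitivity will follow from function composition. Suppose $p$ is a predicate on $X$ that reduces to a predicate $q$ on $Y$, and $q$ in turn reduces to a predicate $r$ on $Z$. Unfolding the two hypotheses, I would first extract a function $f:X\to Y$ together with a proof that $\forall x.~p\,x\toot q(f\,x)$, and a function $g:Y\to Z$ together with a proof that $\forall y.~q\,y\toot r(g\,y)$.

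The candidate reduction from $p$ to $r$ is the composition $x\mapsto g(f\,x)$. It then remains to establish the correctness biconditional $\forall x.~p\,x\toot r(g(f\,x))$. Fixing an arbitrary $x$, I would instantiate the first correctness property at $x$ to obtain $p\,x\toot q(f\,x)$, and instantiate the second at the point $f\,x$ to obtain $q(f\,x)\toot r(g(f\,x))$. Chaining these two biconditionals by transitivity of $\toot$ yields exactly $p\,x\toot r(g(f\,x))$, which is precisely what the composed function is required to satisfy.

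Because every function in Coq's type theory is automatically total and computable, no side conditions on $f$, $g$, or their composition need to be discharged; the entire argument is a direct manipulation of the existential witnesses and the two biconditionals. I therefore do not anticipate any genuine obstacle. The only mildly delicate point is bookkeeping, namely making sure the second correctness property is instantiated at the point $f\,x$ rather than at a fresh variable, after which transitivity of the ``if and only if'' relation closes the goal.
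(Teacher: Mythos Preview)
Your argument is correct and is exactly the standard one: take the two witnessing functions, compose them, and chain the two biconditionals. The paper itself states this fact without proof, treating it as immediate from the definition, so your proposal simply spells out the obvious argument that the paper omits.
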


The basic inductive data structures we use are
\emph{numbers} ($n::=0\mid Sn$) and
\emph{lists} ($L::=\nil\mid s::L$).
We write
\emph{$L_1\con L_2$} for the \emph{concatenation} of two lists,
\emph{$\ol L$} for the reversal of a list,
\emph{$[\,fs\mid s\in A\,]$} for a map over a list,
and \emph{$[\,fs\mid s\in A\land ps\,]$} for
a map and filter over a list.
Moreover, we write
\emph{$s\in L$} if $s$ is a member of $L$, and
\emph{$L_1\incl L_2$} if every member of~$L_1$ is a member of $L_2$.

A \emph{string} is a list of symbols,
and a \emph{symbol} is a number.
The letters $x$, $y$, $z$, $u$, and $v$ range over strings,
and the letters $a$, $b$, $c$ range over symbols.
We write \emph{$xy$} for $x\con y$ and \emph{$ax$} for $a::x$.
We use \emph{$\epsilon$} to denote the empty string.
A \emph{palindrome} is a string $x$ such that $x=\ol x$.

\begin{fact}\label{fact-CFP-rev}
  $\ol{xy}=\ol y\,\ol x$ and $\,\ol{\ol x{}}=x$.
\end{fact}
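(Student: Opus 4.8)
The plan is to prove both equations by induction on the string, reducing them to the defining recursion for reversal, $\ol\epsilon=\epsilon$ and $\ol{ax}=\ol x\con a\epsilon$ (reversal appends the head $a$, as the singleton $a\epsilon=a::\epsilon$, at the very end), together with the standard monoid laws for concatenation on lists: associativity $x(yz)=(xy)z$ and right neutrality $x\epsilon=x$. These background laws are routine facts about Coq's list append, so I would simply assume them.

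First I would establish the distributivity law $\ol{xy}=\ol y\,\ol x$, by induction on $x$ (note: on the \emph{first} argument, since that is the one reversal recurses on; an induction on $y$ would not line up with the recursion). In the base case $x=\epsilon$ both sides collapse to $\ol y$, using $\ol\epsilon=\epsilon$ on the left and right neutrality $\ol y\,\epsilon=\ol y$ on the right. In the step case $x=ax'$, the left-hand side unfolds via the reversal equation to $\ol{x'y}\con a\epsilon$; the induction hypothesis rewrites $\ol{x'y}$ to $\ol y\,\ol{x'}$, and a single reassociation turns $(\ol y\,\ol{x'})\con a\epsilon$ into $\ol y\,(\ol{x'}\con a\epsilon)=\ol y\,\ol{ax'}$, which is the right-hand side.

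Next I would prove the involution law $\ol{\ol x}=x$, again by induction on $x$, now crucially using the distributivity law just proved. The base case is immediate from $\ol\epsilon=\epsilon$. In the step case $x=ax'$ the inner reversal gives $\ol{\ol{ax'}}=\ol{\ol{x'}\con a\epsilon}$; applying distributivity with $u=\ol{x'}$ and $v=a\epsilon$ pushes the reversal through the concatenation to yield $\ol{a\epsilon}\,\ol{\ol{x'}}$, and since $\ol{a\epsilon}=a\epsilon$ (one more unfolding of the reversal equation) this is $a\epsilon\,\ol{\ol{x'}}$; the induction hypothesis $\ol{\ol{x'}}=x'$ then gives $a\epsilon\,x'=ax'=x$.

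The whole argument is elementary; the only point that needs care is the ordering of the two proofs, since the involution genuinely depends on distributivity — there is no way to push $\ol{\ol{ax'}}$ down to $\ol{\ol{x'}}$ without first knowing how reversal interacts with concatenation. Beyond that sequencing, both inductions are driven mechanically by the defining equations of reversal and the associativity and neutrality laws of concatenation.
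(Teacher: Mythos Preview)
Your argument is correct and is the standard proof: induction on $x$ for distributivity, then induction on $x$ using distributivity for involution. The paper itself gives no proof for this fact at all---it is stated as a bare Fact about list reversal and immediately used---so there is nothing to compare against; your write-up simply supplies the routine details the authors chose to omit.
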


\begin{fact}[][list_prefix_inv]\label{fact-prelim-split}
  If $xay=uav$, $a\notin x$, and $a \notin u$, then $x=u$ and $y=v$.
\end{fact}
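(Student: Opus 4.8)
The plan is to prove the claim by induction on the string $x$, keeping $u$, $y$, and $v$ universally quantified so that the induction hypothesis is available for all of them. The only ingredients I need are the injectivity of cons (i.e.\ $ax = by$ implies $a = b$ and $x = y$) together with the two non-membership assumptions, which I use to discard the mismatched cases.

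In the base case $x = \epsilon$ the equation $xay = uav$ reduces to $ay = uav$. I case on $u$. If $u = \epsilon$ the equation becomes $ay = av$, so $y = v$ by injectivity of cons, and $x = \epsilon = u$. If $u = bu'$ then comparing heads gives $a = b$, contradicting $a \notin u$. Hence only the first subcase survives, and it yields the claim.

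In the inductive step $x = bx'$ the equation reads $b(x'ay) = uav$. Again I case on $u$. If $u = \epsilon$ then $b(x'ay) = av$ forces $a = b$, contradicting $a \notin x$. If $u = cu'$ then comparing heads gives $b = c$ and $x'ay = u'av$. Since $a \notin x = bx'$ and $a \notin u = cu'$ specialise to $a \notin x'$ and $a \notin u'$, the induction hypothesis applies to $x'ay = u'av$ and delivers $x' = u'$ and $y = v$, whence $x = bx' = cu' = u$ and $y = v$.

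The only delicate point is the bookkeeping of the hypotheses across the induction: the statement must be generalised over $u$, $y$, and $v$ before inducting on $x$, and in each step the two non-membership assumptions are exactly what is needed to eliminate the two cross cases (an empty prefix matched against a non-empty one). Everything else is a routine discharge of cons-injectivity, so I expect no genuine obstacle beyond setting up this generalisation correctly.
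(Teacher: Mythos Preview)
Your proof is correct and follows exactly the approach the paper indicates: induction on $x$ with $u$, $y$, $v$ generalised. The paper gives no further detail, and your elaboration is the canonical one.
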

\begin{proof}
  By induction on $x$.
\qed\end{proof}

A \emph{card $x/y$} or a \emph{rule $x/y$}
a is a pair $(x,y)$ of two strings.
When we call $x/y$ a card we see
$x$ as the upper and  $y$ as the lower string of the card.
When we call $x/y$ a rule we see
$x$ as the left and  $y$ as the right side of the rule.

The letters $A$, $B$, $C$, $P$, $R$ range over list of cards or rules.

\subsection{Post Correspondence Problem}

A \emph{stack} is a list of cards.
The \coqlink[tau1]{\emph{upper trace $A^1$}} and
the \coqlink[tau2]{\emph{lower trace $A^2$}}
of a stack $A$ are strings defined as follows:
\begin{align*}
  \nil^1&:=\epsilon&\nil^2&:=\epsilon\\
  {(x/y::A)}^1&:=x(A^1)&{(x/y::A)}^2&:=y(A^2)
\end{align*}
Note that $A^1$ is the
concatenation of the upper strings of the cards in $A$,
and that~$A^2$ is the
concatenation of the lower strings of the cards in $A$.
We say that a stack $A$ \emph{matches} if $A^1=A^2$ and a \emph{match} is a matching stack.
An example for a match is the list
$A=[\epsilon/ab,~a/c,~bc/\epsilon]$,
which satisfies $A^1=A^2=abc$.

We can now define the predicate
for the \emph{Post correspondence problem}:
\begin{align*}
  \N{\M{PCP}\,(P)}&~:=~\exists A\incl P.~A\neq\nil\land A^1=A^2
\end{align*}
Note that $\M{PCP}\,(P)$ holds iff there exists
a nonempty match $A\incl P$.
We then say that~$A$ is a \emph{solution} of $P$.
For instance,
$$P=[a/\epsilon,\,b/a,\,\epsilon/bb]$$
is solved by the match
$$A=[\epsilon/bb,\,b/a, \,b/a,\,a/\epsilon,\,a/\epsilon].$$

While it is essential that $A$ is a list
providing for order and duplicates,
$P$ may be thought of as a finite set of cards.

We now define the predicate
for the \emph{modified Post correspondence problem}:
\begin{align*}
  \N{\M{MPCP}\,(x/y,P)}
  &~:=~\exists A\incl x/y::P.~xA^1=yA^2
\end{align*}
Informally, $\M{MPCP}\,(x/y,P)$ is like $\M{PCP}\,(x/y::P)$
with the additional constraint
that the solution for $x/y::P $
starts with the first card $x/y$.

Note that in contrary to most text books we leave open whether $x/y$ is an
element of $P$ and instead choose $A$ as subset of $x/y :: P$.
While this might first seem more complicated, it actually
eases formalisation.
Including $x/y$ into $P$ would require $\M{MPCP}$ to be a predicate on
arguments of the form $(P, x/y, H : x/y \in P)$, i.e. dependent
pairs containing a proof.

\subsection{String Rewriting}

Given a list $R$ of rules,
we define \emph{string rewriting}
with two inductive predicates
\coqlink[rew]{\emph{$x\succ_Ry$}} and \coqlink[rewt]{\emph{$x\succ^*_Ry$}}:
\begin{mathpar}
  \inferrule*
  {x/y\in R}
  {uxv\succ_Ruyv}
  \and
  \inferrule*~{z\succ^*_Rz}
  \and
  \inferrule*
  {x\succ_Ry\\y\succ^*_Rz}
  {x\succ^*_Rz}
\end{mathpar}
Note that
$\succ^*_R$ is the reflexive transitive closure of $\succ_R$,
and that $x\succ_Ry$ says that
$y$ can be obtained from $x$
with a single rewriting step
using a rule in $R$.

\begin{fact}\label{fact-SR-basics}
  The following hold:
  \begin{enumerate}
  \coqitem[PreOrder_rewt] If $x\succ^*_Ry$ and $y\succ^*_Rz$, then $x\succ^*_Rz$.
  \coqitem[rewt_app] If $x\succ^*_Ry$, then $ux\succ^*_Ruy$.
  \coqitem[rewt_subset] If $x\succ^*_Ry$ and $R\incl P$, then $x\succ^*_Py$.
  \end{enumerate}
\end{fact}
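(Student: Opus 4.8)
The plan is to prove all three statements by induction on the derivation of the reflexive-transitive relation $\succ^*_R$, exploiting the fact that this predicate is generated by just two rules: reflexivity $z\succ^*_R z$ and the "prepend a single step" rule sending $x\succ_R y$ and $y\succ^*_R z$ to $x\succ^*_R z$. For parts 2 and 3 I would first establish the corresponding statement at the level of a single step $\succ_R$, and then lift it through the closure by a uniform induction that reuses exactly these two constructors.

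For part 1, I fix the hypothesis $y\succ^*_R z$ and induct on the derivation of $x\succ^*_R y$. In the reflexive case $x=y$, the goal $x\succ^*_R z$ is literally the fixed hypothesis. In the step case we have $x\succ_R x'$ and $x'\succ^*_R y$ together with the induction hypothesis $x'\succ^*_R z$; applying the transitivity constructor of $\succ^*_R$ to $x\succ_R x'$ and $x'\succ^*_R z$ yields $x\succ^*_R z$. This case needs no auxiliary lemma and is the cleanest of the three.

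Part 2 first requires the single-step version: if $x\succ_R y$ then $ux\succ_R uy$. Inverting $x\succ_R y$ produces a context $p,q$ and a rule $s/t\in R$ with $x=psq$ and $y=ptq$; then $ux=(up)sq$ and $uy=(up)tq$, so one application of the step rule with left context $up$ closes it. The full claim then follows by induction on $x\succ^*_R y$: the reflexive case is immediate since $ux=uy$, and in the step case I combine the single-step lemma applied to $x\succ_R x'$ with the induction hypothesis $ux'\succ^*_R uy$ via the transitivity constructor. Part 3 is entirely analogous: its single-step version, namely that $x\succ_R y$ and $R\incl P$ imply $x\succ_P y$, holds because the rule witnessing the step lies in $R\incl P$ and hence in $P$, so the step rule for $P$ applies with the same context; the same induction on $x\succ^*_R y$ then lifts this to the closure.

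None of these steps presents a real obstacle, since each reduces to a straightforward structural induction matched against the two constructors. The only points requiring care are the inversion of the single-step relation in parts 2 and 3, where one must extract both the context $p,q$ and the underlying rule $s/t$, and, in part 2, the implicit use of associativity of concatenation when reassociating $u(psq)$ as $(up)sq$.
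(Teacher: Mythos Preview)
Your proposal is correct and matches the paper's proof, which simply states ``By induction on $x\succ^*_Ry$'' for all three claims. Your explicit treatment of the single-step versions for parts~2 and~3 is just a natural elaboration of what that induction unfolds to in the step case.
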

\begin{proof}
  By induction on $x\succ^*_Ry$.
\qed\end{proof}

Note that the induction lemma for string rewriting can be stated as
$$\forall z.~P z\to (\forall x y.~x \succ_R y \to P y \to P x) \to \forall x.~x
\succ^*_R z \to P x.$$
This is stronger than the lemma Coq infers, because of the
quantification over $z$ on the outside. The quantification
is crucial for many proofs that do induction on derivations $x \succ_R
z$, and we use the lemma throughout the paper without explicitly
mentioning it.

We define the predicates for
the \emph{string rewriting problem}
and the \emph{generalised halting problem} as follows:
\begin{align*}
  \N{\M{SR}\,(R,x,y)}&~:=~x\succ^*_Ry\\
  \N{\M{SRH}\,(R,x,a)}&~:=~\exists y.~x\succ^*_Ry\land a\in y
\end{align*}

We call the second problem \emph{generalised halting problem}, because
it covers the halting problem for deterministic single-tape Turing machines, but
also the halting problems for nondeterministic machines or for
more exotic machines that e.g. have a one-way infinite tape or can
read multiple symbols at a time.

We postpone the definition of Turing machines and of the halting problem TM to section~\ref{sec:TM-SRH}.

\subsection{Post Grammars}

A \emph{Post grammar} is a pair $(R,a)$
of a list $R$ of rules and a symbol~$a$.
Informally, a Post grammar $(R,a)$ is a special case of a
context-free grammar with a single nonterminal~$S$
and two rules $S\to xSy$ and $S\to xay$
for every rule $x/y\in R$,
where $S\neq a$ and~$S$ does not occur in $R$.
We define the \coqlink[sigma]{\emph{projection $\sigma_aA$}}
of a list of rules $A$ with a symbol~$a$ as follows:
\begin{align*}
  \sigma_a\nil&:=a\\
  \sigma_a(x/y::A)&:=x(\sigma_aA)y
\end{align*}
We say that a Post grammar $(R,a)$
\emph{generates} a string $u$
if there exists a nonempty list $A\incl R$
such that $\sigma_aA=u$.
We then say that~$A$ is a \emph{derivation of $u$}
in~$(R,a)$.

We can now define the predicates for the problems CFP and CFI:
\begin{align*}
  \N{\M{CFP}\,(R,a)}
  &~:=~\exists A\incl R.~
    A\neq\nil\land\sigma_aA=\ol{\sigma_aA}\\
  \N{\M{CFI}\,(R_1,R_2,a)}
  &~:=~\exists A_1\incl R_1~\exists A_2\incl R_2.\\
  &\hskip10mm
    A_1\neq\nil\land A_2\neq\nil\land\sigma_aA_1=\sigma_aA_2
\end{align*}
Informally,
$\M{CFP}\,(R,a)$ holds iff
the grammar $(R,a)$ generates a palindrome,
and $\M{CFI}\,(R_1,R_2,a)$ holds iff
there exists a string that is generated by
both grammars $(R_1,a)$ and $(R_2,a)$.
\setCoqFilename{Post_CFG}
Note that as Post grammars are special cases of context-free grammars,
the reduction of PCP to CFG and CFI can be trivially extended to
reductions to the respective problems for context-free grammars.
We prove this formally \coqlink[reduce_grammars]{in the accompanying Coq development}.

\subsection{Alphabets}

For some proofs it will be convenient to fix
a finite set of symbols.
We represent such sets as lists and speak
of \emph{alphabets}.
The letter~\emph{$\Sigma$} ranges over alphabets.
We say that an alphabet $\Sigma$
\emph{covers} a string, card, or stack
if $\Sigma$ contains every symbol
occurring in the string, card, or stack.
We may write \emph{$x\incl\Sigma$} to
say that $\Sigma$ covers~$x$
since both $x$ and $\Sigma$ are lists of symbols.

\subsection{Freshness}
\setCoqFilename{Definitions}
\newcommand{\fresh}{\M{fresh}\ }
At several points we will need to pick fresh symbols from an alphabet.
Because we model symbols as natural numbers, a very simple definition
of freshness suffices.
We define a function $\fresh$such that $\fresh \Sigma \not \in \Sigma$ for an alphabet $\Sigma$ as follows:
\begin{align*}
  \fresh \nil &= 0 \\
  \fresh (a :: \Sigma) &= 1 + a + \fresh \Sigma
\end{align*}

$\fresh$has the following characteristic property:

\begin{lemma}[][fresh_spec']
  For all $a \in \Sigma$, $\fresh \Sigma > a$.
\end{lemma}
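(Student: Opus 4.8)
The plan is to prove the statement by structural induction on the alphabet $\Sigma$, viewed as a list, unfolding the recursive definition of $\fresh$ at each step and distinguishing the two ways in which a symbol can be a member of a cons.

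In the base case $\Sigma = \nil$ there is nothing to show, since no symbol $a$ satisfies $a \in \nil$, so the claim holds vacuously. For the inductive step I would consider $\Sigma = b :: \Sigma'$ and assume the induction hypothesis that $\fresh \Sigma' > a'$ for every $a' \in \Sigma'$. Unfolding the definition gives $\fresh(b :: \Sigma') = 1 + b + \fresh \Sigma'$. Now I fix $a \in b :: \Sigma'$ and split on membership: if $a = b$, then $1 + b + \fresh \Sigma' \geq 1 + b > b = a$, using only $\fresh \Sigma' \geq 0$; if instead $a \in \Sigma'$, the induction hypothesis yields $\fresh \Sigma' > a$, and adding the nonnegative quantity $1 + b$ preserves the inequality, so $1 + b + \fresh \Sigma' > a$. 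In both cases $\fresh(b :: \Sigma') > a$, as required.

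There is essentially no hard step here: the two summands contributed by each cons play complementary roles, as the $+1$ guarantees that the head symbol is dominated strictly, while the running total $\fresh \Sigma'$ together with monotonicity of addition on natural numbers carries the bound over to the tail. The only facts needed are the recursion equations for $\fresh$, the inversion principle for list membership ($a \in b :: \Sigma'$ iff $a = b$ or $a \in \Sigma'$), and elementary arithmetic, all of which Coq's \texttt{lia} tactic discharges automatically once the induction and the case split are set up.
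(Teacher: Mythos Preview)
Your proof is correct and follows exactly the approach the paper indicates: induction on $\Sigma$ with $a$ generalised, then a case split on membership in the cons case, discharged by elementary arithmetic. There is nothing to add.
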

\begin{proof}
  By induction on $\Sigma$, with $a$ generalised.
\qed\end{proof}

The property is most useful when exploited in the following way:
\begin{corollary}[][fresh_spec]
  For all $a \in \Sigma$, $\fresh \Sigma \neq a$.
\end{corollary}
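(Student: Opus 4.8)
The plan is to read this off immediately from the preceding lemma, so no new induction is needed. Fix $a \in \Sigma$. The lemma gives $\fresh \Sigma > a$, and a natural number strictly greater than $a$ is in particular different from $a$; hence $\fresh \Sigma \neq a$.

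In Coq terms, I would apply the lemma to the given membership hypothesis $a \in \Sigma$ to obtain the strict inequality, and then discharge the goal $\fresh \Sigma \neq a$ by the arithmetic fact that $m > n$ implies $m \neq n$ (via \texttt{lia}, or a lemma such as \texttt{Nat.lt\_neq}). The only reason to isolate this as a separate corollary is pragmatic: the later reductions pick a symbol fresh for a given alphabet and need precisely the disequality $\fresh \Sigma \neq a$ rather than the ordering, so weakening $>$ to $\neq$ once here spares us from repeating the step at every use site.

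I do not expect any genuine obstacle. All of the real content lives in the inductive proof of the inequality $\fresh \Sigma > a$; the corollary is a one-line weakening, and the main thing to get right is simply threading the membership proof $a \in \Sigma$ through to the lemma and invoking the decidable order on $\mathbb{N}$.
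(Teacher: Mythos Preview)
Your proposal is correct and matches the paper's intent: the corollary is stated immediately after the lemma $\fresh\Sigma > a$ with no separate proof given, so it is meant to follow by the trivial weakening from $>$ to $\neq$ that you describe.
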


An alternative approach to this is to formalise alphabets explicitly
as types $\Sigma$.
This has the advantage that arbitrarily many fresh symbols can
be introduced simultaneously using definitions like $\Gamma := \Sigma + X$, and
symbols in $\Gamma$ stemming from $\Sigma$ can easily be shown different from fresh symbols
stemming from $X$ by inversion.
However, this means that strings $x : \Sigma^*$ have to be explicitly
embedded pointwise when used as strings of type $\Gamma^*$, which
complicates proofs.

In general, both approaches have benefits and tradeoffs. Whenever
proofs rely heavily on inversion (as e.g. our proofs in Section 8),
the alternative approach is favorable. If proofs need the construction
of many strings, as most of our proofs do, modelling symbols as
natural numbers shortens proofs.

\section{SRH to SR}
\setCoqFilename{SRH_SR}

We show that SRH (the generalised halting problem) reduces to SR (string rewriting).
We start with the definition of the reduction function.
Let $R$, $x_0$, and $a_0$ be given.

We fix an alphabet $\Sigma$ covering $R$, $x_0$, and $a_0$.
We now add rules to $R$ that allow $x \succ^*_R a_0$ if $a_0 \in x$.
$$P:=R\cup\,
\mset{aa_0/a_0}{a\in\Sigma}\cup\,
\mset{a_0a/a_0}{a\in\Sigma}$$

\begin{lemma}[][x_rewt_a0]\label{lem-SRH-1}
  If $a_0\in x\incl\Sigma$, then $x\succ^*_Pa_0$.
\end{lemma}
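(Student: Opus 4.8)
The plan is to reduce $x$ to $a_0$ by repeatedly absorbing the symbols adjacent to $a_0$ using the two families of rules added to form $P$. A direct structural induction on $x$ is awkward, because after one rewriting step the remaining string is no longer a subterm of $x$. I would therefore first split $x$ at an occurrence of $a_0$ and handle the two sides by two separate, clean inductions.

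First I would establish an auxiliary claim absorbing everything to the \emph{right} of $a_0$: for every $v\incl\Sigma$, $a_0v\succ^*_Pa_0$. This follows by induction on $v$. The base case $v=\epsilon$ is reflexivity. For $v=bv'$ we have $b\in\Sigma$, so the rule $a_0b/a_0\in P$ gives the single step $a_0bv'\succ_Pa_0v'$; combining it with the inductive hypothesis $a_0v'\succ^*_Pa_0$ via transitivity (Fact~\ref{fact-SR-basics}.1) yields $a_0bv'\succ^*_Pa_0$.

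Symmetrically, I would establish a claim absorbing everything to the \emph{left}: for every $u\incl\Sigma$, $ua_0\succ^*_Pa_0$, by induction on $u$. The base case is again reflexivity. For $u=bu'$, the inductive hypothesis gives $u'a_0\succ^*_Pa_0$; prepending $b$ with the congruence Fact~\ref{fact-SR-basics}.2 gives $bu'a_0\succ^*_Pba_0$, and the rule $ba_0/a_0\in P$ (using $b\in\Sigma$) provides the final step $ba_0\succ_Pa_0$. Transitivity closes the case.

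Finally, from $a_0\in x$ I would split $x$ at that occurrence, writing $x=ua_0v$ with $u,v\incl\Sigma$ (since $x\incl\Sigma$). Absorbing the right part by the first claim gives $a_0v\succ^*_Pa_0$, hence $ua_0v\succ^*_Pua_0$ by the congruence Fact~\ref{fact-SR-basics}.2; absorbing the left part by the second claim gives $ua_0\succ^*_Pa_0$. Chaining these with transitivity yields $x=ua_0v\succ^*_Pa_0$. The only mildly delicate point is recognising that the obvious induction on $x$ does not go through directly, which is exactly what motivates the split-and-two-inductions structure; everything else is routine once the transitivity and congruence properties of Fact~\ref{fact-SR-basics} are available.
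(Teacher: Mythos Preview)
Your proposal is correct and follows essentially the same approach as the paper: the paper also proves the two auxiliary facts $a_0y\succ^*_Pa_0$ and $ya_0\succ^*_Pa_0$ by induction on $y$, and then concludes using Fact~\ref{fact-SR-basics}~(1,2). Your write-up simply makes the final splitting $x=ua_0v$ and the chaining explicit, where the paper leaves that step implicit.
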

\begin{proof}
  For all $y \subseteq \Sigma$, $a_0y \succ_P^* a_0$ and $ya_0 \succ_P^* a_0$ follow by induction on $y$.
  The claim now follows with
  Fact~\ref{fact-SR-basics}~(1,2).
\qed\end{proof}

\begin{lemma}[][equi]\label{lem-SRH-equi}
  $\M{SRH}\,(R,x_0,a_0)\toot\M{SR}\,(P,x_0,a_0)$.
\end{lemma}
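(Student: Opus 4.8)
The plan is to prove the biconditional $\M{SRH}\,(R,x_0,a_0)\toot\M{SR}\,(P,x_0,a_0)$ by establishing each direction separately. Recall that $\M{SRH}\,(R,x_0,a_0)$ unfolds to $\exists y.~x_0\succ^*_Ry\land a_0\in y$, while $\M{SR}\,(P,x_0,a_0)$ is simply $x_0\succ^*_Pa_0$.

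For the forward direction, assume we are given some $y$ with $x_0\succ^*_Ry$ and $a_0\in y$. Since $R\incl P$ by construction of $P$, Fact~\ref{fact-SR-basics}~(3) lifts the rewriting sequence to $x_0\succ^*_Py$. It now suffices to show $y\succ^*_Pa_0$, for then transitivity (Fact~\ref{fact-SR-basics}~(1)) yields $x_0\succ^*_Pa_0$. This final step is exactly Lemma~\ref{lem-SRH-1}: we need $a_0\in y$ and $y\incl\Sigma$. The first hypothesis we already have; for the second, I would observe that $\Sigma$ covers $R$ and $x_0$, so every symbol appearing along a rewriting sequence from $x_0$ using rules of $R$ stays within $\Sigma$. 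Strictly this invariant---that $\succ^*_R$ preserves being covered by $\Sigma$---may need a small auxiliary argument by induction on the derivation $x_0\succ^*_Ry$, using that each rule $u/v\in R$ has both sides in $\Sigma$.

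For the backward direction, assume $x_0\succ^*_Pa_0$. The obvious witness for $\M{SRH}$ is $y:=a_0$ itself, since trivially $a_0\in a_0$, so it remains to produce a derivation $x_0\succ^*_Ra_0$ using only the original rules $R$. This is the subtle point, because the added rules in $P\setminus R$ could in principle have been used. However, these extra rules of the form $aa_0/a_0$ and $a_0a/a_0$ can only fire once $a_0$ is already present in the string. The intended argument is therefore that if $x_0\succ^*_Pa_0$, then at some point along the derivation a symbol $a_0$ first appears, and up to that point only rules of $R$ are applied; this gives an intermediate string $y$ with $x_0\succ^*_Ry$ and $a_0\in y$, which is precisely the $\M{SRH}$ witness we want.

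The main obstacle is this backward direction: cleanly extracting the first occurrence of $a_0$ from the $P$-derivation and proving that the prefix of the derivation uses only $R$-rules. I expect this requires an induction on the derivation $x_0\succ^*_Pa_0$ together with a case analysis on whether $a_0$ already occurs in the current string. If $a_0\in x_0$ we are immediately done with $y:=x_0$; otherwise the first rewriting step must use an $R$-rule (an added rule needs $a_0$ present), and we recurse on the tail of the derivation. Phrasing and discharging this invariant is where the real work lies, whereas the forward direction is essentially an immediate consequence of Lemma~\ref{lem-SRH-1} and Fact~\ref{fact-SR-basics}.
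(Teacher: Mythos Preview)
Your proposal is correct and follows essentially the same route as the paper. The forward direction is identical (the paper also silently uses that $y\incl\Sigma$, which you rightly flag as needing a small closure-under-$R$ argument), and your backward direction---induction on $x_0\succ^*_Pa_0$ with a case split that an extra rule can only fire when $a_0$ is already present, otherwise the first step is an $R$-step and the induction hypothesis applies---is exactly what the paper's one-line ``by induction on $x_0\succ^*_Pa_0$'' unpacks to; your initial idea of taking $y:=a_0$ was a false start you correctly discarded.
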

\begin{proof}
  Let $x_0\succ^*_R y$ and $a_0\in y$.
  Then $y\succ^*_Pa_0$ by Lemma~\ref{lem-SRH-1}.
  Moreover, $x_0\succ^*_P y$
  by Fact~\ref{fact-SR-basics}~(3).
  Thus $x_0\succ^*_Pa_0$ by Fact~\ref{fact-SR-basics}~(1).

  Let $x_0\succ^*_Pa_0$.
  By induction on $x_0\succ^*_Pa_0$ it follows
  that there exists $y$ such that
  $x_0\succ^*_R y$ and $a_0\in y$.
\qed\end{proof}

\begin{theorem}[][reduction]
  SRH reduces to SR.
\end{theorem}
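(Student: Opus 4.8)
The plan is to establish the reduction by exhibiting an explicit reduction function and verifying its correctness using the equivalence already proved in Lemma~\ref{lem-SRH-equi}. Recall that a reduction of $\M{SRH}$ to $\M{SR}$ is a function $f$ mapping each $\M{SRH}$-instance $(R,x_0,a_0)$ to an $\M{SR}$-instance such that $\M{SRH}\,(R,x_0,a_0)\toot\M{SR}\,(f(R,x_0,a_0))$. The construction of $P$ carried out above, depending on $R$, $x_0$, and $a_0$, is exactly such a function: I would set $f(R,x_0,a_0):=(P,x_0,a_0)$, where $P$ is defined via the fixed alphabet $\Sigma$ covering $R$, $x_0$, and $a_0$.

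First I would observe that $f$ is a well-defined total function, since the alphabet $\Sigma$ can be computed from the instance (e.g.\ as the list of all symbols occurring in $R$, $x_0$, and $a_0$), and the added rule sets $\mset{aa_0/a_0}{a\in\Sigma}$ and $\mset{a_0a/a_0}{a\in\Sigma}$ are obtained by a map over $\Sigma$. In Coq's type theory totality and computability are automatic, so no separate argument is needed here. Next I would invoke Lemma~\ref{lem-SRH-equi}, which states precisely the required biconditional $\M{SRH}\,(R,x_0,a_0)\toot\M{SR}\,(P,x_0,a_0)$ for this very $P$. Since $\M{SR}\,(P,x_0,a_0)$ is by definition $\M{SR}\,(f(R,x_0,a_0))$, the equivalence $\forall I.~\M{SRH}(I)\toot\M{SR}(f(I))$ holds, which is exactly the definition of a reduction.

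The main substantive work — constructing $P$ and proving the equivalence — has already been done in the preceding lemmas, so the theorem itself is essentially a repackaging step. I anticipate no genuine obstacle: the only point requiring care is making sure the alphabet $\Sigma$ is chosen as a function of the input rather than fixed externally, so that $f$ genuinely depends on its argument; with $\Sigma$ computed from the instance this is immediate. Thus the proof reduces to instantiating the notion of reduction with $f$ and appealing to Lemma~\ref{lem-SRH-equi}.
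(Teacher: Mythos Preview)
Your proposal is correct and matches the paper's approach: the paper's proof is the single line ``Follows with Lemma~\ref{lem-SRH-equi}'', and your argument is exactly this, spelled out in more detail (defining the reduction function $f(R,x_0,a_0):=(P,x_0,a_0)$ and invoking the equivalence). Your remark that $\Sigma$ must be computed from the input rather than fixed externally is a valid observation, but nothing beyond the paper's own treatment is needed.
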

\begin{proof}
  Follows with Lemma~\ref{lem-SRH-equi}.
\qed\end{proof}

\section{SR to MPCP}
\setCoqFilename{SR_MPCP}

We show that SR (string rewriting) reduces to MPCP (the modified Post
correspondence problem).
We start with the definition of the reduction function.

Let $R$, $x_0$ and $y_0$ be given.
We fix an alphabet $\Sigma$ covering $R$, $x_0$, and $y_0$.
We also fix two symbols $\$,\#\notin\Sigma$
and define:
\begin{align*}
  \coqlink[d]{d}&~:=~\$\,/\,\$x_0\#\\
  \coqlink[e]{e}&~:=~y_0\#\$\,/\,\$\\
  \coqlink[P]{P}&~:=~\set{d,e}\cup R \cup\,\set{\#/\#}\cup\,\mset{a/a}{a\in\Sigma}
\end{align*}

The idea of the reduction is as follows:
Assume $\Sigma = [a,b,c]$ and rules $bc / a$ and $aa/b$ in
$R$. Then $abc \succ_R aa \succ_R b$ and we have
$d = \$ / \$ abc \#$, $e = b\#\$/\$$, and
$P = [d, e, bc/a, aa/b, \dots, a/a, b/b,c/c]$, omitting possibe further rules in $R$.
Written suggestively, the following stack matches:

\[  \scbox{\$}{\$ abc \#} \scbox{a}{a} \scbox{bc}{a} \scbox{\#}{\#}
  \scbox{aa}{b} \scbox{\#}{\#} \scbox{b\#\$}{\$} \]

And, vice versa, every matching stack starting with $d$ will yield
a derivation of $abc \succ_R^* b$.

We now go back to the general case and state the correctness lemma for the reduction function.

\begin{lemma}[][SR_MPCP_cor]
  \label{lem-SRMPCP-cor}
  $x_0\succ^*_R y_0$ if and only if
  there exists a stack $A\incl P$ such that
  $d::A$ matches.
\end{lemma}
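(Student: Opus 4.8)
The plan is to prove both directions through generalised inductive invariants, the key idea being to track the \emph{overhang} between the two traces of the stack built on top of $d$. Writing $(d::A)^1=\$\,A^1$ and $(d::A)^2=\$x_0\#\,A^2$, the card $d$ contributes the prefix $\$$ to both traces, so $d::A$ matches if and only if $A^1=x_0\#\,A^2$; that is, the upper trace of $A$ must run ahead of its lower trace by exactly $x_0\#$. More generally I will consider stacks $A\incl P$ whose traces satisfy $A^1=(v\#w)\,A^2$ for strings $v,w\incl\Sigma$, and show that such an overhang $v\#w$ encodes a derivation of $wv$ into $y_0$: the part $v$ before the separator is the unconsumed suffix of the current derivation string, while $w$ after it is the prefix of the next string already built.

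For the forward direction I prove, by induction on the derivation $x_0\succ^*_R y_0$ using the strengthened induction principle noted above (generalising over the start string), the invariant: if $x\succ^*_R y_0$ and $x\incl\Sigma$, then there is $A\incl P$ with $A^1=x\#\,A^2$. In the reflexive base case $x=y_0$ I take $A=[e]$; since $[e]^1=y_0\#\$$ and $[e]^2=\$$ we have $[e]^1=y_0\#\,[e]^2$. For the step $x\succ_R x'$, where $x=upv$ and $x'=uqv$ with $p/q\in R$, I prepend to the stack $A'$ obtained for $x'$ a block $B:=[\,a/a\mid a\in u\,]\con[p/q]\con[\,a/a\mid a\in v\,]\con[\#/\#]$ that copies $u$ and $v$, applies the rule, and emits a separator. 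A short computation gives $B^1=x\#$ and $B^2=x'\#$, whence $A:=B\con A'$ satisfies $A^1=x\#\,A'^1=x\#x'\#\,A'^2=x\#\,A^2$. All block cards lie in $P$ because $u,v\incl\Sigma$, so $A\incl P$; applying the invariant to $x_0$ yields the desired matching stack $d::A$.

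The backward direction is the crux. Here I prove, by induction on the stack $A$, the invariant $\Psi$: for all $v,w\incl\Sigma$, if $A\incl P$ and $A^1=(v\#w)\,A^2$, then $wv\succ^*_R y_0$; the lemma then follows by instantiating $\Psi$ with $v:=x_0$ and $w:=\epsilon$. The proof of $\Psi$ proceeds by case analysis on the first card of $A$, and the fact that $\$,\#\notin\Sigma$ forces the structure at each step, using Fact~\ref{fact-prelim-split} to split string equalities around the unique separators. The empty stack and the card $d$ are impossible, since the overhang $v\#w$ always begins with a symbol of $\Sigma$ or with $\#$, never with $\$$. A copy card $a/a$ forces $v=av'$ and shunts $a$ past the separator to the new overhang $v'\#wa$, leaving $wv$ unchanged. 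A rule card $p/q$ forces $p$ to be a prefix $v=pv'$ (as $p$ is $\#$-free it cannot straddle the separator) and introduces exactly one rewriting step $wv=wpv'\succ_R wqv'$ before appealing to the induction hypothesis on the new overhang $v'\#wq$. The separator $\#/\#$ forces $v=\epsilon$ and turns $\epsilon\#w$ into $w\#\epsilon$, again preserving $wv$. Finally $e=y_0\#\$/\$$ forces $v=y_0$ and $w=\epsilon$, closing the induction via the reflexive step $y_0\succ^*_R y_0$.

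The main obstacle is discovering the invariant $\Psi$ in the right, two-part overhang form $v\#w$: the naive statement ``$A^1=x\#\,A^2$ implies $x\succ^*_R y_0$'' is \emph{not} preserved, because a copy card moves a symbol across the separator and thereby breaks the single-gap shape. Once the two-part overhang is in place, each case reduces to a prefix computation justified by $\$,\#\notin\Sigma$, but the rule-card case is the only one that actually produces a rewriting step, and getting its prefix split right—ruling out a rule that straddles the separator—is precisely where the freshness of $\#$ is essential.
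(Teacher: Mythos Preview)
Your proposal is correct and follows essentially the same approach as the paper: your forward invariant is exactly the paper's Lemma~\ref{lem-SRMPCP-trans1}, and your backward invariant $\Psi$ (with the two-part overhang $v\#w$ and conclusion $wv\succ^*_R y_0$) is exactly the paper's Lemma~\ref{lem-SRMPCP-trans2}, proved by the same induction on $A$ with the same case analysis on the first card. Your ``overhang'' framing is a nice piece of intuition, but the mathematical content and the inductive invariants coincide with the paper's.
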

From this lemma we immediately obtain the
reduction theorem (Theorem~\ref{theo-SRMPCP}).
The proof of the lemma consists of two \emph{translation lemmas}:
Lemma~\ref{lem-SRMPCP-trans1} and
Lemma~\ref{lem-SRMPCP-trans2}.
The translation lemmas generalise
the two directions of Lemma~\ref{lem-SRMPCP-cor}
such that they can be shown with canonical inductions.

\begin{lemma}[][SR_MPCP]\label{lem-SRMPCP-trans1}
  Let $x\incl\Sigma$ and $x\succ^*_R y_0$.
  Then there exists $A\incl P$ such that $A^1=x\#A^2$.
\end{lemma}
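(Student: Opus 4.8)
The plan is to prove the statement by induction on the derivation $x\succ^*_R y_0$, using the strengthened induction principle for $\succ^*_R$ noted above, which fixes the target $y_0$ and varies the source. Since the stack I build will use copy-cards, the premise $x\incl\Sigma$ has to be carried through the induction, so I would generalise and take as inductive predicate $Q(x):=x\incl\Sigma\to\exists A\incl P.~A^1=x\#A^2$. A convenient auxiliary is the \emph{copy block} $\kappa(z):=[\,a/a\mid a\in z\,]$ for a string $z$: a one-line induction on $z$ gives $\kappa(z)^1=\kappa(z)^2=z$, and $\kappa(z)\incl P$ whenever $z\incl\Sigma$, since each card $a/a$ with $a\in\Sigma$ lies in $P$.

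For the base case the source is $y_0$ itself, and I would pick $A:=[e]$. As $e=y_0\#\$\,/\,\$\in P$ we have $A\incl P$, and since $A^1=y_0\#\$$ and $A^2=\$$ the claim $A^1=y_0\#A^2$ holds (the premise $y_0\incl\Sigma$ is not even needed here). For the inductive step, suppose $x\succ_R x'$ and $x'\succ^*_R y_0$ with $Q(x')$ available, and assume $x\incl\Sigma$. Inverting the single step yields strings $u_1,u_2$ and a rule $v_1/v_2\in R$ with $x=u_1v_1u_2$ and $x'=u_1v_2u_2$. From $x\incl\Sigma$ I get $u_1,u_2\incl\Sigma$, and since $\Sigma$ covers $R$ also $v_2\incl\Sigma$, hence $x'=u_1v_2u_2\incl\Sigma$; applying $Q(x')$ then gives some $A\incl P$ with $A^1=x'\#A^2$.

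Now I would assemble the new stack by prepending a single block to $A$:
$$B:=\kappa(u_1)\con[v_1/v_2]\con\kappa(u_2)\con[\#/\#]\con A.$$
The block copies the unchanged context $u_1,u_2$, applies the rewrite rule once, and emits a $\#$ separator, so by the copy-block lemma its upper trace is $u_1v_1u_2\#=x\#$ and its lower trace is $u_1v_2u_2\#=x'\#$. Using $A^1=x'\#A^2$, a short concatenation computation then gives $B^1=x\#x'\#A^2$ and $B^2=x'\#A^2$, whence $B^1=x\#B^2$ as required; and $B\incl P$ holds because $\kappa(u_1),\kappa(u_2)\incl P$, $v_1/v_2\in R\incl P$, $\#/\#\in P$, and $A\incl P$.

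The substance of the proof is not the closing trace calculation, which is routine, but getting the setup right: recognising that the solution must be built from the $y_0$-end backward, prepending one block per rewrite step; that this block has to copy the rule's context verbatim while applying the rule on its active part and then insert a separator; and that the alphabet premise $x\incl\Sigma$ must be threaded through the induction and is preserved only because $\Sigma$ covers $R$. Once these choices are fixed, both the trace equation $B^1=x\#B^2$ and the inclusion $B\incl P$ fall out mechanically.
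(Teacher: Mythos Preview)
Your proof is correct and follows essentially the same approach as the paper's: induction on the derivation $x\succ^*_R y_0$, taking $[e]$ in the base case and in the step prepending the copy cards for the context, the rule card, and $\#/\#$ to the stack obtained from the induction hypothesis. Your presentation is somewhat more careful than the paper's in explicitly threading the premise $x\incl\Sigma$ through the induction and packaging the copy cards via $\kappa$, but the underlying argument is identical.
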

\begin{proof}
  By induction on $x\succ^*_R y_0$.
  In the first case, $x = y_0$ and $[e]^1 = x \# [e]^2$.
  In the second case, $x \succ y$ and $y \succ^* y_0$.
  By induction hypothesis there is $A \incl P$ such that $A^1 = y
  \# A^2$.
  Let $x = (a_1 \dots a_n)u (b_1 \dots b_n)$ and $y = (a_1
  \dots a_n) v  (b_1 \dots b_n)$ for $u/v \in R$.
  We define $B := (a_1/a_1) \dots (a_n / a_n) ::  (u/v) :: (b_1 / b_1) \dots (b_n /
  b_n) :: (\#/\#)  :: A$.
  Now $B^1 = x \# A^1 = x \# y \# A^2 = x \# B^2$.
\qed\end{proof}

\begin{lemma}[][MPCP_SR]\label{lem-SRMPCP-trans2}
  Let $A\incl P$,
  $A^1=x\#yA^2$,
  and $x,y\incl\Sigma$.
  Then $yx\succ^*_Ry_0$.
\end{lemma}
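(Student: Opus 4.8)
The plan is to prove the statement by induction on the stack $A$, with $x$ and $y$ generalised, and to perform a case analysis on the first card of $A$. Since the right-hand side $x\#yA^2$ contains the separator $\#$ it is nonempty, so $A=c::A'$ for some card $c\in P$ and some $A'\incl P$. In each case I first exploit the equation $A^1=x\#yA^2$, which unfolds to $p(A')^1=x\#y\,q(A')^2$ when $c=p/q$, either to derive a contradiction or to bring $(A')^1$ into the shape $(A')^1=x'\#y'(A')^2$ with $x',y'\incl\Sigma$ required by the induction hypothesis, and then I assemble the rewriting derivation.

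The card $d=\$/\$x_0\#$ is impossible as a first card: $A^1$ would start with $\$$, whereas $x\#yA^2$ starts with a symbol of $\Sigma$ or with $\#$, and $\$\notin\Sigma$ and $\$\neq\#$. For a copy card $a/a$ with $a\in\Sigma$, comparing first symbols forces $x=ax'$; cancelling $a$ gives $(A')^1=x'\#(ya)(A')^2$, the induction hypothesis gives $(ya)x'\succ^*_Ry_0$, and since $yx=yax'=(ya)x'$ we are done with no rewriting step. For the separator card $\#/\#$, splitting both sides at $\#$ via Fact~\ref{fact-prelim-split} forces $x=\epsilon$ and $(A')^1=y\#(A')^2$; the induction hypothesis yields $y\succ^*_Ry_0$, which is exactly $yx\succ^*_Ry_0$. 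For the card $e=y_0\#\$/\$$, splitting at $\#$ via Fact~\ref{fact-prelim-split} (both $y_0$ and $x$ avoid $\#$) forces $y_0=x$, and comparing first symbols of the remainder $\$(A')^1=y\$(A')^2$ forces $y=\epsilon$; hence $yx=y_0$ and the goal $y_0\succ^*_Ry_0$ holds by reflexivity.

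The interesting case is a rule $u/v\in R$. Here I argue that $u$ must be a prefix of $x$: both $u$ and $x$ avoid $\#$ since $u,x\incl\Sigma$, while on the right-hand side the part following $x$ begins with $\#$; were $x$ a proper prefix of $u$, the symbol of $u$ immediately following $x$ would have to equal $\#$, contradicting $u\incl\Sigma$. Hence $x=ux'$, and cancelling $u$ gives $(A')^1=x'\#(yv)(A')^2$. Since $x',yv\incl\Sigma$, the induction hypothesis yields $(yv)x'\succ^*_Ry_0$. Finally the rule $u/v$ provides the single step $yux'\succ_Ryvx'$, and prepending it to that derivation using the third rule defining $\succ^*_R$ gives $yx=yux'\succ^*_Ry_0$.

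I expect the main obstacle to be this rule case, specifically the prefix argument establishing $x=ux'$: unlike the other cases, the separator $\#$ does not sit immediately after $u$ on the left-hand side but lies inside $(A')^1$, so Fact~\ref{fact-prelim-split} does not apply directly and one must reason explicitly about the position of the first $\#$, ruling out that $x$ is a proper prefix of $u$ before cancelling. The remaining cases are then routine first-symbol comparisons and applications of Fact~\ref{fact-prelim-split}.
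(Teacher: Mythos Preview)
Your proposal is correct and follows essentially the same approach as the paper: induction on $A$ with $x$ and $y$ generalised, followed by the same case analysis on the first card, with the same treatment of each case (including the prefix argument for the rule case). The only cosmetic difference is that the paper is terser about why $x=ux'$ in the rule case, simply noting that $\#\notin u$, whereas you spell out the underlying prefix reasoning.
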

\begin{proof}
  By induction on $A$ with $x$ and $y$ generalised.
  We do all cases in detail:
  \begin{itemize}
  \item The cases where $A = \nil$ or $A = d :: B$ are contradictory.
  \item Let $A = e :: B$. By assumption, $y_0 \# \$ B^1 = x \# y \$ B^2$.
    Then $x = y_0$, $y = \epsilon$ and $yx = y_0 \succ_R^* y_0$.
  \item Let $A = u/v :: B$ for $u/v \in R$.
    Because $\#$ is not in $u$ and by assumption $u B^1 = x \# y v B^2$, $x = u \app x'$.
    And $yx = yux' \succ yvx' \succ^* y_0$ by induction hypothesis.
  \item Let $A = \#/\# :: B$. By assumption, $\#B^1 = x\#y\#B^2$. Then $x =
    \epsilon$ and we have $B^1 = y\# \epsilon B^2$. By induction
    hypothesis, this yields $yx = \epsilon y \succ_R^* y_0$ as needed.
  \item Let $A = a/a :: B$ for $a \in \Sigma$ and assume $a B^1 = x \#
    y a B^2$. Then $x = a x'$ and $B^1 = x'\#yaB^2$. By induction
    hypothesis, this yields $yx = yax' \succ_R^* y_0$ as needed.
  \end{itemize}
\qed\end{proof}

\begin{theorem}[][reduction]\label{theo-SRMPCP}
  SR reduces to MPCP.
\end{theorem}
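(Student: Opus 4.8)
The plan is to instantiate the definition of reduction from Section~2 and exhibit the concrete reduction function together with its correctness proof, the latter being essentially the content of Lemma~\ref{lem-SRMPCP-cor}. The reduction function sends an $\M{SR}$ instance $(R,x_0,y_0)$ to the $\M{MPCP}$ instance $(d,P)$ with $d$ and $P$ as defined above; no separate totality or computability argument is needed, since every function in Coq's type theory already satisfies these requirements. What remains is to verify the reduction property $\M{SR}\,(R,x_0,y_0)\toot\M{MPCP}\,(d,P)$, that is, $x_0\succ^*_R y_0\toot\M{MPCP}\,(d,P)$.

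First I would unfold $\M{MPCP}\,(d,P)$. Since $d=\$\,/\,\$x_0\#$, this predicate reads $\exists A\incl d::P.~\$A^1=\$x_0\#A^2$. Now I would observe that $d\in P$, so the inclusions $A\incl d::P$ and $A\incl P$ are interchangeable, and that by the defining equations for traces $(d::A)^1=\$A^1$ and $(d::A)^2=\$x_0\#A^2$; hence the condition $\$A^1=\$x_0\#A^2$ is literally the assertion that $d::A$ matches. Thus $\M{MPCP}\,(d,P)$ is equivalent to the phrasing ``there exists $A\incl P$ such that $d::A$ matches'' used in Lemma~\ref{lem-SRMPCP-cor}.

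With this reformulation the theorem is exactly Lemma~\ref{lem-SRMPCP-cor}, so I would simply invoke it to discharge both directions. For orientation, recall that the two directions of that correctness lemma are themselves the translation lemmas specialised: the forward direction instantiates Lemma~\ref{lem-SRMPCP-trans1} at $x:=x_0$ (yielding some $A\incl P$ with $A^1=x_0\#A^2$, so that $d::A$ matches), and the backward direction instantiates Lemma~\ref{lem-SRMPCP-trans2} at $x:=x_0$ and $y:=\epsilon$ (turning $A^1=x_0\#A^2$ into $\epsilon x_0=x_0\succ^*_R y_0$).

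Because I may assume Lemma~\ref{lem-SRMPCP-cor}, no inductive work is left at this stage; all the substantial reasoning — the inductions that build and dismantle matching stacks — has already been carried out in the two translation lemmas. The only obstacle here is bookkeeping at the definitional boundary: convincing oneself, and Coq, that the $\M{MPCP}$ predicate quantified over $d::P$ coincides with the ``$d::A$ matches'' formulation of the correctness lemma. The one genuine subtlety is the appeal to $d\in P$ to exchange $A\incl d::P$ for $A\incl P$; the leading $\$$ causes no difficulty, since it already appears on both sides of the matching equation and never needs to be cancelled.
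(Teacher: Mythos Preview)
Your proposal is correct and follows the same approach as the paper, which simply says the theorem ``follows with Lemma~\ref{lem-SRMPCP-cor}.'' You have merely spelled out the definitional unfolding (in particular the observation $d\in P$ and the trace equations for $d::A$) that the paper leaves implicit.
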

\begin{proof}
  Follows with Lemma~\ref{lem-SRMPCP-cor}.
\qed\end{proof}

The translation lemmas formulate
what we call the \emph{inductive invariants}
of the reduction function.
The challenge of proving the correctness
of the reduction function is finding
strong enough inductive invariants
that can be verified with canonical inductions.

\section{MPCP to PCP}
\setCoqFilename{MPCP_PCP}

We show that MPCP (modified PCP) reduces to PCP.

The idea of the reduction is that for a stack $A = [x_1/y_1, \dots,
x_n, y_n]$ and a first card $x_0/y_0$ where $x_i = a_i^0 \dots
a_i^{m_i}$ and $y_i = b_i^0 \dots b_i^{m'_i}$ we
have \enlargethispage{1mm}
\begin{align*}
&(a_0^0 \dots a_0^{m_0}) (a_1^0 \dots a_1^{m_1}) \dots (a_n^0
\dots a_n^{m_n}) \\ = &(b_0^0 \dots b_0^{m'_0}) (b_1^0 \dots b_1^{m'_1}) \dots (b_n^0
\dots b_n^{m'_n})
\end{align*}

if and only if we have
\begin{align*}
&\$ ( \# a_0^0 \dots \# a_0^{m_0}) (\# a_1^0 \dots \# a_1^{m_1})
  \dots (\# a_n^0 \dots \# a_n^{m_n}) \# \$ \\ = &\$ \# (b_0^0 \# \dots b_0^{m'_0}
  \#) (b_1^0 \# \dots b_1^{m'_1} \#) \dots (b_n^0 \#
\dots b_n^{m'_n} \#) \$.
\end{align*}

The reduction function implements this idea by constructing a
dedicated first and a dedicated last card and by inserting $\#$-symbols
into the MPCP cards:

Let $x_0/y_0$ and $R$ be given.
We fix an alphabet $\Sigma$ covering $x_0/y_0$ and $R$.
We also fix two symbols $\$,\#\notin\Sigma$.
We define two functions \coqlink[hash_L]{\emph{$^\#x$}} and \coqlink[hash_R]{\emph{$x^\#$}}
inserting the symbol $\#$ before and after
every symbol of a string $x$:
\begin{align*}
  ^\#\epsilon&:=\epsilon&\epsilon ^\#&:=\epsilon\\
  ^\#(ax)&:=\#a(^\#x)&(ax) ^\#&:=a\#(x^\#)
\end{align*}
We define:
\begin{align*}
  \coqlink[d]{d}&~:=~\$(^\#x_0)\,/\,\$\#(y_0^\#)\\
  \coqlink[e]{e}&~:=~\#\$\,/\,\$\\
  \coqlink[P]{P}&~:=~\set{d,e}\cup\,
     \mset{\,^\#x\,/\,y^\#}{x/y\in x_0/y_0 :: R\land (x/y) \neq (\epsilon/\epsilon)}
\end{align*}
We now state the correctness lemma for the reduction function.

\begin{lemma}[][MPCP_PCP_cor]
  \label{lem-MPCP-PCP-cor}
  There exists  a stack $A\incl x_0/y_0::R$
  such that $x_0A^1=y_0A^2$
  if and only if
  there exists a nonempty stack $B\incl P$ such that $B^1=B^2$.
\end{lemma}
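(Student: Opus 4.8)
The statement is an equivalence, so the plan is to prove the two directions separately, mirroring the structure of the SR-to-MPCP proof: a direct construction for the forward direction and a generalised \emph{translation lemma} proved by induction for the converse. The workhorses throughout are a handful of elementary facts about the insertion functions, all provable by induction on the string: the distributions ${}^\#(uv)=({}^\#u)({}^\#v)$ and $(uv)^\#=(u^\#)(v^\#)$, the shift identity $({}^\#x)\#=\#(x^\#)$, the injectivity of ${}^\#(\cdot)$, and the observation that ${}^\#x$ is either empty or ends in a symbol of $\Sigma$, hence never in $\#$ (as $\#\notin\Sigma$). I would establish these first.

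For the direction from MPCP to PCP, suppose $A\incl x_0/y_0::R$ with $x_0A^1=y_0A^2=:w$. I would set $B:=d::[\,{}^\#x/y^\#\mid x/y\in A\land x/y\neq\epsilon/\epsilon\,]::[e]$. Dropping the $\epsilon/\epsilon$ cards is necessary for $B\incl P$ and harmless, since such cards contribute $\epsilon$ to both traces. Using distribution one computes $B^1=\$({}^\#w)\#\$$ and $B^2=\$\#(w^\#)\$$, and the shift identity gives $\$({}^\#w)\#\$=\$\#(w^\#)\$$, so $B^1=B^2$. As $B$ contains $d$ it is nonempty, and it is the required PCP solution.

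For the converse, let $B\incl P$ be nonempty with $B^1=B^2$. I would first show $B$ begins with $d$: the common first symbol of $B^1=B^2$ must be $\$$, since the upper string of every card starts with $\$$, $\#$, or nothing, while every lower string starts with $\$$, a $\Sigma$-symbol, or nothing, and $\#\notin\Sigma$; and $d$ is the only card whose upper string starts with $\$$ (a leading empty-upper card is excluded, as it would place a $\Sigma$-symbol at the front of $B^2$). Writing $B=d::C$, cancelling the leading $\$$, and rewriting the lower side with the shift identity reduces the goal to the invariant ${}^\#x_0\cdot C^1=({}^\#y_0)\cdot\#\cdot C^2$. I would then prove the \emph{translation lemma}: if $C\incl P$, $x,y\incl\Sigma$ and ${}^\#x\cdot C^1=({}^\#y)\cdot\#\cdot C^2$, then some $A\incl x_0/y_0::R$ satisfies $xA^1=yA^2$; instantiating $x:=x_0$, $y:=y_0$ yields the MPCP solution. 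The proof is by induction on $C$ with a case split on its first card. The empty case cannot occur, since ${}^\#x=({}^\#y)\#$ is impossible (the right side is nonempty and ends in $\#$, the left side does not); the $d$-case is likewise contradictory, as splitting at the first $\$$ via Fact~\ref{fact-prelim-split} forces the same impossibility. In the $e$-case, Fact~\ref{fact-prelim-split} at $\$$ yields $({}^\#x)\#=({}^\#y)\#$, hence $x=y$ by injectivity, and $A:=\nil$ suffices. In the decisive case of a translated card ${}^\#u/v^\#$, distribution together with the shift identity $\#(v^\#)=({}^\#v)\#$ rewrites the hypothesis to ${}^\#(xu)\cdot C'^1=({}^\#(yv))\cdot\#\cdot C'^2$, so the induction hypothesis applies with overhang $(xu,yv)$; prepending $u/v$ to the stack it returns gives $xA^1=xuA'^1=yvA'^2=yA^2$.

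The main obstacle is discovering this invariant. The naive statement $B^1=B^2\Rightarrow x_0A^1=y_0A^2$ is not inductive, and the correct generalisation must record the precise shape of the overhang between the two traces as $\#$-separated blocks of the form ${}^\#x$ on top and $({}^\#y)\#$ on the bottom. Once this form is found, the shift identity $\#(v^\#)=({}^\#v)\#$ is exactly what absorbs the stray $\#$ in the translated-card case and folds it into the induction hypothesis, making each step mechanical; the only other delicate point is the edge-case reasoning with empty strings needed to pin the first card down as $d$.
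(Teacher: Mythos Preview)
Your proof is correct. The backward direction is essentially the paper's argument: your ``starts with $d$'' reasoning is Lemma~\ref{lem-MPCP-PCP-d-first}, and your translation lemma is Lemma~\ref{lem-MPCP-PCP-trans2}, with your invariant $({}^\#x)C^1=({}^\#y)\#C^2$ being the paper's $({}^\#x)B^1=\#(y^\#)B^2$ rewritten via the shift identity.

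The forward direction differs. The paper proves it via a second inductive translation lemma (Lemma~\ref{lem-MPCP-PCP-trans1}): from $xA^1=yA^2$ one obtains $B\incl P$ with $({}^\#x)B^1=\#(y^\#)B^2$, by induction on $A$; instantiating $x:=x_0$, $y:=y_0$ and prepending $d$ gives the match. You instead write down the witness $d::[\,{}^\#x/y^\#\mid x/y\in A,\,x/y\neq\epsilon/\epsilon\,]\con[e]$ in closed form and verify $B^1=B^2$ by a single computation using distribution and the shift identity. Your construction is exactly what the paper's induction produces when unrolled, so the content is the same; your version is shorter and avoids an induction, while the paper's version exhibits the pleasing symmetry that both directions are governed by the same inductive invariant.
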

From this lemma we immediately obtain the
desired reduction theorem (Theorem~\ref{theo-MPCP-PCP}).
The proof of the lemma consists of two translation lemmas
(Lemmas~\ref{lem-MPCP-PCP-trans1} and~\ref{lem-MPCP-PCP-trans2})
and a further auxiliary lemma (Lemma~\ref{lem-MPCP-PCP-d-first}).

\begin{lemma}[][match_start]\label{lem-MPCP-PCP-d-first}
  Every nonempty match $B\incl P$ starts with $d$.
\end{lemma}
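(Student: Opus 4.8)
The plan is to pin down the first card of $B$ by examining the single shared first symbol of the matching string $B^1=B^2$. First I would record that this string is nonempty: if $B^1=\epsilon$, then every card of $B$ has empty upper string, and since $B^1=B^2$ also empty lower string, so every card equals $\epsilon/\epsilon$; but $P$ contains no such card ($d$ and $e$ are not $\epsilon/\epsilon$, and the family $^\#x/y^\#$ explicitly excludes $x/y=\epsilon/\epsilon$), contradicting $B\neq\nil$. Hence $B^1=B^2$ has a well-defined first symbol $s$.

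The key step is to show $s=\$$, by classifying the cards of $P$ according to the first symbol of their (nonempty) upper and lower strings. Every nonempty upper string begins with $\$$ --- and this happens only for $d$ --- or with $\#$ (for $e$ and for every $^\#x/y^\#$ with $x\neq\epsilon$); in particular none begins with a symbol of $\Sigma$, so $s\in\{\$, \#\}$. Dually, every nonempty lower string begins with $\$$ (for $d$ and $e$) or with a symbol of $\Sigma$ (for $^\#x/y^\#$ with $y\neq\epsilon$), so $s\in\{\$\}\cup\Sigma$. Since $\#\notin\Sigma$ and $\#\neq\$$, these two memberships force $s=\$$.

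Finally I would inspect the first card $c=u/v$ of $B$. If $u=\epsilon$, then $c$ must be a card $^\#x/y^\#$ with $x=\epsilon$, whence $y\neq\epsilon$ and the lower string $v=y^\#$ begins with a symbol of $\Sigma$; but then the first symbol of $B^2$ lies in $\Sigma$, contradicting $s=\$$. Therefore $u\neq\epsilon$, and the first symbol of $u$ equals $s=\$$. As $d$ is the only card whose upper string begins with $\$$, this gives $c=d$, as required.

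The main obstacle, and the reason a naive ``first card'' argument does not suffice, is precisely the existence of cards with an empty upper string, namely the $^\#x/y^\#$ with $x=\epsilon$: such a card is prefix-compatible as a first card, so the reasoning ``the two sides of the first card must share a first symbol'' does not by itself exclude it. Only the global observation that the shared first symbol of the whole match must be $\$$ rules it out, and making the case distinction on the first symbols of upper and lower strings exhaustive --- while carefully using $\#\notin\Sigma$ and $\$\notin\Sigma$ --- is where the real care is needed.
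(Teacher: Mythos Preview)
Your argument is correct and rests on the same idea as the paper's proof: classify the possible first symbols of upper and lower strings of cards in $P$ and observe that only $d$ is compatible with $B^1=B^2$. The organisation differs slightly: the paper eliminates each non-$d$ candidate for the first card in turn (first $e$, then $^\#x/y^\#$ with both $x,y$ nonempty, then the two degenerate cases $x=\epsilon$ and $y=\epsilon$), whereas you first establish globally that the shared first symbol of $B^1=B^2$ must be $\$$ and only then inspect the first card. Your version is a touch more systematic and makes the nonemptiness of $B^1$ explicit, which the paper's proof leaves implicit; conversely the paper's per-card elimination is closer to how one would mechanise the case split in Coq. Substantively the two proofs coincide.
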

\begin{proof}
  Let $B$ be a nonempty match $B\incl P$.
  Then $e$ cannot be the first card of $B$
  since the upper string and lower string of~$e$
  start with different symbols.
  For the same reason ${^\#x}\,/\,y^\#$
  cannot be the first card of $B$
  if $x/y\in R$ and both $x$ and $y$ are nonempty.

  Consider $\epsilon/ay\in R$.
  Then $\epsilon/(ay)^\#$ cannot be the first card of $B$
  since no card of~$P$ has an upper string starting with $a$.

  Consider $ax/\epsilon\in R$.
  Then $^\#(ay)/\epsilon$ cannot be the first card of $B$
  since no card of~$P$ has a lower string starting with $\#$.
\qed\end{proof}

For the proofs of the translation lemmas
we need a few facts about $^\#x$ and $x^\#$.

\begin{lemma}\label{lem-MPCP-PCP-sharp}
  The following hold:
  \begin{enumerate}
    \coqitem[hash_swap] $(^\#x)\#=\#(x^\#)$.
    \coqitem[hash_L_app] $^\#(xy)=(^\#x)(^\#y)$.
    \coqitem[hash_R_app] $(xy)^\#=(x^\#)(y^\#)$.
    \coqitem[hash_L_diff] $^\#x\neq\#(y^\#)$.
    \coqitem[hash_R_inv] $x^\#=y^\#\to x=y$.
  \end{enumerate}

\end{lemma}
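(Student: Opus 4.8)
The plan is to establish all five statements by structural induction, each on the leftmost string argument, generalising the remaining string where the statement mentions more than one. Recall that $^\#x$ inserts a $\#$ in front of every symbol and $x^\#$ inserts a $\#$ after every symbol, so that for $x=ax'$ we have $^\#(ax')=\#a(^\#x')$ and $(ax')^\#=a\#(x'^\#)$, while both operations send $\epsilon$ to $\epsilon$. In every case the recursion unfolds on the head symbol of the first argument, and the empty case is settled by the fact that $^\#\epsilon=\epsilon$ and $\epsilon^\#=\epsilon$ act as units for concatenation.

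For the three equational facts (1)--(3) a single induction on $x$ suffices, with no generalisation of the other string, since concatenation on the right is untouched by the recursion. In (1) the base case is $\epsilon\#=\#=\#\epsilon$, and in the step $(^\#(ax'))\#=\#a((^\#x')\#)=\#a\#(x'^\#)=\#((ax')^\#)$, using the induction hypothesis $(^\#x')\#=\#(x'^\#)$ and associativity of concatenation. Facts (2) and (3), the distributivity of the two insertion functions over concatenation, unfold the definitions on the head of $x$ and reduce immediately to the induction hypothesis.

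For the injectivity statement (5), $x^\#=y^\#\to x=y$, I would induct on $x$ with $y$ generalised and case on $y$ in each case. When $x=\epsilon$ the hypothesis forces $y^\#=\epsilon$, which is only possible for $y=\epsilon$; the case $x=ax'$ against $y=\epsilon$ (and the symmetric one) is discharged because one side is empty and the other is not. When $x=ax'$ and $y=by'$ the hypothesis $a\#(x'^\#)=b\#(y'^\#)$ yields $a=b$ and $x'^\#=y'^\#$ by two list-cons inversions, so the induction hypothesis instantiated at $y'$ gives $x'=y'$ and hence $x=y$.

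The one statement that is not purely mechanical is the disequality (4), $^\#x\neq\#(y^\#)$, and it is the step I expect to be the main obstacle, since a naive induction on $x$ stalls in the case $x=ax'$, $y=by'$ with equal heads $a=b$. The cleanest route is to observe that $^\#x$ always has even length $2|x|$, whereas $\#(y^\#)$ has odd length $1+2|y|$, so the two strings can never coincide; this reduces (4) to two trivial length lemmas, each proved by induction. Alternatively one can push the induction on $x$ through directly, generalising $y$ and case-splitting on it: the empty and length-mismatch cases are immediate, and in the case $a=b$ the remaining goal $^\#x'\neq\#(y'^\#)$ is exactly the induction hypothesis instantiated at $y'$.
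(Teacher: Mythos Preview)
Your proposal is correct and matches the paper's approach, which simply states ``By induction on $x$'' for all five parts; your detailed unfolding, including the necessary generalisation of $y$ in parts (4) and (5), fills in exactly what that terse proof leaves implicit. The alternative parity-of-length argument you sketch for (4) is a pleasant shortcut but not needed, since the direct induction with $y$ generalised that you also give goes through.
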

\begin{proof}
  By induction on $x$.
\qed\end{proof}

\begin{lemma}[][MPCP_PCP]\label{lem-MPCP-PCP-trans1}
  Let $A\incl x_0/y_0::R$ and $xA^1=yA^2$.
  Then there exists a stack $B\incl P$
  such that $(^\#x)B^1=\#(y^\#)B^2$.
\end{lemma}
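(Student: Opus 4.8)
The plan is to prove the statement by induction on the stack $A$, generalising over both $x$ and $y$. The reduction replaces each card $x'/y'$ of $A$ by its $\#$-annotated counterpart $^\#x'/y'^\#$, so the natural invariant is to carry the already-read prefixes $x$ and $y$ along and let the induction hypothesis absorb the cards processed so far. Generalising $x$ and $y$ is essential: it lets the hypothesis and the goal be stated uniformly as the induction peels off one card at a time.

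For the base case $A = \nil$ the hypothesis $xA^1 = yA^2$ collapses to $x = y$. I would take $B := [e]$, so that $B^1 = \#\$$ and $B^2 = \$$, and the goal $(^\#x)\#\$ = \#(y^\#)\$$ then follows from $x = y$ together with the swap identity $(^\#x)\# = \#(x^\#)$ of Lemma~\ref{lem-MPCP-PCP-sharp}(1).

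For the inductive step write $A = x'/y' :: A'$ with $x'/y' \in x_0/y_0 :: R$ and $A' \incl x_0/y_0 :: R$. Unfolding the traces turns the hypothesis into $(xx')A'^1 = (yy')A'^2$, which is exactly the premise needed to apply the induction hypothesis to $A'$ with the extended prefixes $xx'$ and $yy'$. This yields some $B' \incl P$ with $(^\#(xx'))B'^1 = \#((yy')^\#)B'^2$. I would then split on whether $x'/y' = \epsilon/\epsilon$. If it is, then $x' = y' = \epsilon$, the extended prefixes equal $x$ and $y$, and $B := B'$ already satisfies the goal. Otherwise $^\#x'/y'^\# \in P$, so I would prepend this translated card and set $B := (^\#x'/y'^\#) :: B'$; then $B^1 = (^\#x')B'^1$ and $B^2 = (y'^\#)B'^2$, and the goal reduces to the induction hypothesis after rewriting with the distributivity facts $^\#(xx') = (^\#x)(^\#x')$ and $(yy')^\# = (y^\#)(y'^\#)$ of Lemma~\ref{lem-MPCP-PCP-sharp}(2,3).

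The only genuinely delicate point is the case distinction on $x'/y' = \epsilon/\epsilon$: since $P$ deliberately excludes the empty card, one cannot blindly prepend the translated card, and the induction must instead simply skip such cards, reusing $B'$ unchanged. Everything else is bookkeeping about how $^\#(\cdot)$ and $(\cdot)^\#$ distribute over concatenation, which is precisely what Lemma~\ref{lem-MPCP-PCP-sharp} supplies.
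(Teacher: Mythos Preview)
Your proof is correct and follows essentially the same approach as the paper: induction on $A$ with $x$ and $y$ generalised, base case handled by $[e]$ and Lemma~\ref{lem-MPCP-PCP-sharp}(1), and inductive step via the induction hypothesis on the extended prefixes together with the distributivity facts~(2,3) and the case split on $x'/y' = \epsilon/\epsilon$. The only cosmetic difference is that the paper rewrites with~(2,3) before the case split rather than after, which is immaterial.
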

\begin{proof}
  By induction on $A$ with $x$ and $y$ generalised.
  The case for $A = \nil$ follows from Lemma~\ref{lem-MPCP-PCP-sharp}
  (1) by choosing $[e]$.

  For the other case, let $A = x'/y' :: A'$.
  Then by assumption $x x' A'^1 = y y' A'^2$. 
  And thus by induction hypothesis there exists $B \subseteq P$ such that
  $^\#(x x') B^1 = \# (y y')^\# B^2$.
  By  Lemma~\ref{lem-MPCP-PCP-sharp} (2) and (3),
  $(^\#x) (^\#x') B^1 = \# (y ^\#) ({y'}^\#) B^2$.

  If $(x'/y') \neq (\epsilon/\epsilon)$, then choosing $^\# x' / {{y'}
    ^\#} :: B \subseteq P$ works.
  Otherwise, $B \subseteq P$ works.
\qed\end{proof}

\begin{lemma}[][PCP_MPCP]\label{lem-MPCP-PCP-trans2}
  Let $B\incl P$ such that
  $(^\#x)B^1=\#(y^\#)B^2$ and $x,y\incl\Sigma$.
  Then there exists a stack $A\incl x_0/y_0::R$ such that $xA^1=yA^2$.
\end{lemma}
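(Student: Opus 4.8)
The plan is to prove the statement by induction on the list $B$, with $x$ and $y$ generalised, exactly mirroring the forward translation (Lemma~\ref{lem-MPCP-PCP-trans1}) and the SR-side Lemma~\ref{lem-SRMPCP-trans2}. Since $B\incl P$, the head of any nonempty $B$ is a card of $P$, so a case analysis gives four cases: $B=\nil$, $B$ headed by $d$, $B$ headed by $e$, and $B$ headed by a hash-card ${}^\#u/v^\#$ with $u/v\in x_0/y_0::R$ and $u/v\neq\epsilon/\epsilon$. The two algebraic tools I would lean on are the insertion identities (Lemma~\ref{lem-MPCP-PCP-sharp}) and the prefix-splitting fact (Fact~\ref{fact-prelim-split}), the latter always applied with the fresh marker $\$$ as separator. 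The point of freshness ($\$\notin\Sigma$ and $\$\neq\#$) is that $\$$ never occurs inside any image ${}^\#z$ or $z^\#$ for $z\incl\Sigma$, which is precisely the side condition Fact~\ref{fact-prelim-split} requires.

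The only productive case is the hash-card one. Writing $C$ for the tail of $B$, the hypothesis reads $({}^\#x)({}^\#u)C^1=\#(y^\#)(v^\#)C^2$, and Lemma~\ref{lem-MPCP-PCP-sharp}~(2) and (3) fold the inserted prefixes together into ${}^\#(xu)C^1=\#\,(yv)^\#\,C^2$. This is exactly the shape of the hypothesis for the tail $C$ with the residuals advanced to $xu$ and $yv$, which are still covered by $\Sigma$; so the induction hypothesis yields a stack $A_0\incl x_0/y_0::R$ with $(xu)A_0^1=(yv)A_0^2$. I would then return $A:=u/v::A_0$, for which $xA^1=(xu)A_0^1=(yv)A_0^2=yA^2$ and $A\incl x_0/y_0::R$, since $u/v\in x_0/y_0::R$.

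The remaining three cases are closed without recursion. For $B=e::C$ the hypothesis is $({}^\#x)\#\$C^1=\#(y^\#)\$C^2$; rewriting the left side with Lemma~\ref{lem-MPCP-PCP-sharp}~(1) to $\#(x^\#)\$C^1$, cancelling the leading $\#$, and splitting at $\$$ via Fact~\ref{fact-prelim-split} gives $x^\#=y^\#$, hence $x=y$ by Lemma~\ref{lem-MPCP-PCP-sharp}~(5). The crucial observation is that reaching $e$ closes the match, so the (possibly nonempty) remainder $C$ is simply discarded and $A:=\nil$ returned: then $xA^1=x=y=yA^2$. Finally, $B=\nil$ and $B=d::C$ are both impossible under the hypothesis: the empty case directly forces ${}^\#x=\#(y^\#)$, while the $d$ case yields $({}^\#x)\$({}^\#x_0)C^1=\#(y^\#)\$\#(y_0^\#)C^2$ and hence again ${}^\#x=\#(y^\#)$ after splitting at the first $\$$ with Fact~\ref{fact-prelim-split}; both contradict Lemma~\ref{lem-MPCP-PCP-sharp}~(4).

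I expect the main obstacle to be the $e$ case: recognising that, unlike the hash-card case, it must \emph{terminate} the reconstruction with $A=\nil$ rather than recurse, yet the equation still pins down $x=y$. Beyond that, the delicate part is purely the bookkeeping of where the fresh separator $\$$ sits in each card's contribution, so that Fact~\ref{fact-prelim-split} is invoked with its freshness side conditions genuinely discharged; once the four head-cases are set up with $x,y$ generalised, each collapses to a single insertion identity together with one prefix split.
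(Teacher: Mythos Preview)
Your proposal is correct and follows exactly the paper's approach: induction on $B$ with $x,y$ generalised, the same four-way case split, contradictions for $B=\nil$ and $B=d::C$ via Lemma~\ref{lem-MPCP-PCP-sharp}~(4), termination with $A=\nil$ at $e$ via Lemma~\ref{lem-MPCP-PCP-sharp}~(5), and the recursive hash-card case via parts~(2) and~(3). Your exposition is in fact more explicit than the paper's, which leaves the uses of Lemma~\ref{lem-MPCP-PCP-sharp}~(1) and Fact~\ref{fact-prelim-split} (splitting at~$\$$) in the $d$ and $e$ cases implicit.
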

\begin{proof}
  By induction on $B$.
  The cases $B = \nil$ and $B = d :: B'$ yield contradictions using
  Lemma~\ref{lem-MPCP-PCP-sharp} (4).
  For $B = e :: B'$, choosing $A = \nil$ works by
  Lemma~\ref{lem-MPCP-PCP-sharp}~(5).

  The interesting case is $B = ^\# x' / {y'} ^\# :: B'$ for $x'/ {y'}
  \in x_0/y_0 :: R$ with $ (x'/y') \neq (\epsilon/\epsilon)$. 
  By assumption and Lemma~\ref{lem-MPCP-PCP-sharp} (2) and (3) we know
  that $^\# (x {x'}) B'^1 = \# (y {y'})^\# B'^2$.
  Now by induction hypothesis, where all premises follow easily, there
  is $A \subseteq x_0/y_0 :: R$ with $x  x'  A^1 = y
  y'  A^2$ and thus $x'/y' :: A$ works.
  \qed\end{proof}

\begin{theorem}[][reduction]\label{theo-MPCP-PCP}
  MPCP reduces to PCP.
\end{theorem}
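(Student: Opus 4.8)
The goal is to prove Theorem~\ref{theo-MPCP-PCP}, that MPCP reduces to PCP. By the definition of reduction given in Section~2, this amounts to exhibiting a computable function $f$ mapping an MPCP instance $(x_0/y_0, R)$ to a PCP instance, together with a proof that $\M{MPCP}(x_0/y_0, R) \toot \M{PCP}(f(x_0/y_0, R))$. The reduction function is already fixed: $f$ sends $(x_0/y_0, R)$ to the list of cards $P$ constructed above, after choosing an alphabet $\Sigma$ covering $x_0/y_0$ and $R$ and two fresh symbols $\$,\#\notin\Sigma$. Since every function definable in Coq's type theory is automatically total and computable, the only remaining obligation is the logical equivalence, and this is precisely the content of the correctness lemma (Lemma~\ref{lem-MPCP-PCP-cor}). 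So the plan is simply to unfold both predicates and read off the equivalence from that lemma.

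First I would unfold $\M{MPCP}(x_0/y_0, R)$, which by definition asserts the existence of a stack $A\incl x_0/y_0::R$ with $x_0 A^1 = y_0 A^2$. This is verbatim the left-hand side of Lemma~\ref{lem-MPCP-PCP-cor}. Next I would unfold $\M{PCP}(P)$, which asserts the existence of a nonempty stack $B\incl P$ with $B^1 = B^2$; this is verbatim the right-hand side of Lemma~\ref{lem-MPCP-PCP-cor}. Hence the two predicates are equivalent by a direct appeal to that lemma, and $f$ is the desired reduction.

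The genuine mathematical work has already been discharged in the three supporting lemmas, so no real obstacle remains at the level of the theorem itself. The forward direction of Lemma~\ref{lem-MPCP-PCP-cor} is obtained from translation Lemma~\ref{lem-MPCP-PCP-trans1} instantiated at $x := x_0$, $y := y_0$ (prepending the dedicated first card $d$ and using $\$(^\#x_0)\,d^1 = \$\#(y_0^\#)\,d^2$ together with Lemma~\ref{lem-MPCP-PCP-sharp}(1) to close up the match); the backward direction combines Lemma~\ref{lem-MPCP-PCP-d-first}, which forces any nonempty match to begin with $d$, and translation Lemma~\ref{lem-MPCP-PCP-trans2} applied to the remainder of the match with $x := x_0$, $y := y_0$. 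The only point deserving care is verifying the side conditions $x_0, y_0 \incl \Sigma$ needed to invoke Lemma~\ref{lem-MPCP-PCP-trans2}, but these hold by the choice of $\Sigma$ as a covering alphabet. I therefore expect the proof of the theorem to be a one-line consequence of Lemma~\ref{lem-MPCP-PCP-cor}, with all the ingenuity residing in the inductive invariants expressed by the translation lemmas rather than in the final assembly.
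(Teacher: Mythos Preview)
Your proposal is correct and matches the paper's own proof, which likewise derives the theorem in one line from Lemma~\ref{lem-MPCP-PCP-cor}. Your additional sketch of how that lemma is assembled from Lemmas~\ref{lem-MPCP-PCP-d-first}, \ref{lem-MPCP-PCP-trans1}, and~\ref{lem-MPCP-PCP-trans2} is accurate (modulo a small slip where you wrote $d^1,d^2$ but meant $B^1,B^2$, and the appeal to Lemma~\ref{lem-MPCP-PCP-sharp}(1) is not actually needed at that point).
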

\begin{proof}
  Follows with Lemma~\ref{lem-MPCP-PCP-cor}.
\qed\end{proof}

\section{PCP to CFP}
\setCoqFilename{PCP_CFP}

We show that PCP reduces to CFP (the palindrome problem for Post grammars).

Let $\#$ be a symbol.

\begin{fact}\label{fact-CFP-palin}
  Let $\#\notin x,y$.
  Then $x\# y$ is a palindrome iff $y=\ol x$.
\end{fact}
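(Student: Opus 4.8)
The plan is to unfold the definition of palindrome and compute the reversal of $x\#y$ explicitly, reducing both directions of the equivalence to a single application of the splitting lemma (Fact~\ref{fact-prelim-split}). First I would rewrite $\ol{x\#y}$ using Fact~\ref{fact-CFP-rev}: since reversal turns a concatenation into the reversed concatenation of the reversals and the single symbol $\#$ reverses to itself, we obtain $\ol{x\#y}=\ol{y}\,\#\,\ol{x}$. Hence $x\#y$ is a palindrome exactly when $x\#y=\ol{y}\,\#\,\ol{x}$, and it remains to show this equation is equivalent to $y=\ol{x}$.

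For the forward direction I would assume $x\#y=\ol{y}\,\#\,\ol{x}$ and apply Fact~\ref{fact-prelim-split} with the separating symbol $\#$. The required side conditions are $\#\notin x$, which holds by assumption, and $\#\notin\ol{y}$, which follows from $\#\notin y$ since reversal preserves membership. The lemma then yields $x=\ol{y}$ and $y=\ol{x}$; the second conjunct is precisely the claim.

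For the backward direction I would assume $y=\ol{x}$ and compute directly. By Fact~\ref{fact-CFP-rev} we have $\ol{y}=\ol{\ol{x}}=x$, so substituting into the reversal gives $\ol{x\#y}=\ol{y}\,\#\,\ol{x}=x\,\#\,\ol{x}=x\#y$, which is exactly the statement that $x\#y$ is a palindrome.

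The proof is essentially routine once the reversal of $x\#y$ is computed; the only point requiring a little care is discharging the membership side conditions of Fact~\ref{fact-prelim-split}, namely observing that $\#\notin\ol{y}$ reduces to the hypothesis $\#\notin y$ because reversal does not change the set of symbols occurring in a string. I expect this bookkeeping about membership under reversal to be the main (minor) obstacle.
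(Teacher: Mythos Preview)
Your proposal is correct and is exactly the approach the paper intends: the paper's own proof is the one-liner ``Follows with Facts~\ref{fact-CFP-rev} and~\ref{fact-prelim-split},'' and you have faithfully unpacked that into the computation of $\ol{x\#y}=\ol y\,\#\,\ol x$ followed by an application of the splitting lemma for the forward direction and a direct rewrite for the backward direction. The observation that $\#\notin\ol y$ follows from $\#\notin y$ is indeed the only small bookkeeping step needed.
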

\begin{proof}
  Follows with Facts~\ref{fact-CFP-rev} and~\ref{fact-prelim-split}.
\qed\end{proof}

There is an obvious connection between
matching stacks and palindromes:
A stack
$$[x_1/y_1\cld x_n/y_n]$$
matches if and only if the string
$$x_1\cdots x_n\#\,\ol{y_n}\cdots\ol{y_1}$$
is a palindrome,
provided the symbol $\#$ does not appear in the stack
(follows with
Facts~\ref{fact-CFP-rev} and~\ref{fact-CFP-palin}
using $\ol{y_n}\cdots\ol{y_1}=\ol{y_1\cdots y_n}\,$).
Moreover, strings of the form
$x_1\cdots x_n\#\,\ol{y_n}\cdots\ol{y_1}\,$
with $n\ge1$
may be generated by a Post grammar
having a rule $x/\,\ol y\,$
for every card $x/y$ in the stack.
The observations yield a reduction of PCP to CFP.

We formalise the observations with a function
$$\gamma A:=\mset{x/\ol y}{x/y\in A}.$$

\begin{lemma}[][sigma_gamma]\label{lem-CFP-0}
  $\sigma_\#(\gamma A)=A^1\,\#\,\ol{A^2}$.
\end{lemma}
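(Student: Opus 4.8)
The plan is to prove the identity by a straightforward structural induction on the stack $A$, unfolding the definitions of $\gamma$, $\sigma_\#$, and the two traces $(\cdot)^1$, $(\cdot)^2$ at each step, and reconciling the two sides using the reversal law $\ol{xy}=\ol y\,\ol x$ from Fact~\ref{fact-CFP-rev}.

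First I would dispose of the base case $A=\nil$. Here $\gamma\nil=\nil$, so the left-hand side is $\sigma_\#\nil=\#$ by definition of $\sigma$, while the right-hand side is $\nil^1\,\#\,\ol{\nil^2}=\epsilon\,\#\,\ol\epsilon=\#$, using $\ol\epsilon=\epsilon$. Both sides agree. For the inductive step, consider $A=x/y::A'$ and assume $\sigma_\#(\gamma A')=A'^1\,\#\,\ol{A'^2}$. Since $\gamma(x/y::A')=x/\ol y::\gamma A'$, the definition of $\sigma_\#$ on a nonempty list gives $\sigma_\#(\gamma A)=x\,(\sigma_\#(\gamma A'))\,\ol y$, and the induction hypothesis rewrites this to $x\,A'^1\,\#\,\ol{A'^2}\,\ol y$. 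On the other side, the trace definitions yield $A^1=x\,(A'^1)$ and $A^2=y\,(A'^2)$, so by Fact~\ref{fact-CFP-rev} we have $\ol{A^2}=\ol{y\,A'^2}=\ol{A'^2}\,\ol y$; hence $A^1\,\#\,\ol{A^2}=x\,A'^1\,\#\,\ol{A'^2}\,\ol y$, which is exactly the expression obtained for the left-hand side.

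I do not expect any real obstacle: this is a clean induction with no generalisation of variables required. The only point that demands attention is the orientation of the reversal. The lower string $y$ of the leading card occurs at the \emph{front} of $A^2$, but it must surface at the very \emph{end} of $\ol{A^2}$; this is precisely what the antihomomorphism $\ol{xy}=\ol y\,\ol x$ delivers, and it matches the trailing $\ol y$ that $\sigma_\#$ produces from the rewritten card $x/\ol y$. Keeping this bookkeeping straight is all that is needed for the two sides to line up.
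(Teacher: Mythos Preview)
Your proposal is correct and matches the paper's own proof, which is also by induction on $A$ using Fact~\ref{fact-CFP-rev}. The bookkeeping you spell out for the reversal of the lower trace is exactly the content of that induction.
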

\begin{proof}
  By induction on $A$ using Fact~\ref{fact-CFP-rev}. 
\qed\end{proof}

\begin{lemma}[][tau_eq_iff]\label{lem-CFP1}
  Let $A$ be a stack and
  $\#$ be a symbol not occurring in $A$.
  Then $A$ is a match if and only if
  $\sigma_\#(\gamma A)$ is a palindrome.
\end{lemma}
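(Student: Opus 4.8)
The plan is to collapse the whole statement onto Lemma~\ref{lem-CFP-0} and Fact~\ref{fact-CFP-palin}. First I would rewrite the right-hand side using Lemma~\ref{lem-CFP-0}, replacing $\sigma_\#(\gamma A)$ by $A^1\,\#\,\ol{A^2}$. The goal then becomes the purely combinatorial biconditional: $A^1=A^2$ if and only if $A^1\,\#\,\ol{A^2}$ is a palindrome.

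Next I would transport the freshness hypothesis from the stack to its traces. Since $\#$ occurs in no card of $A$, it occurs in none of the upper strings and in none of the lower strings, hence neither in the concatenation $A^1$ nor in the concatenation $A^2$; and since reversal preserves the symbols of a string, $\#$ does not occur in $\ol{A^2}$ either. This is a routine induction on $A$, but all I actually need to extract from it are the two facts $\#\notin A^1$ and $\#\notin\ol{A^2}$.

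With those in hand I would apply Fact~\ref{fact-CFP-palin} instantiated at $x:=A^1$ and $y:=\ol{A^2}$, which yields that $A^1\,\#\,\ol{A^2}$ is a palindrome exactly when $\ol{A^2}=\ol{A^1}$. Finally, using $\ol{\ol{z}}=z$ from Fact~\ref{fact-CFP-rev} (reversal is an involution, hence injective), the equation $\ol{A^2}=\ol{A^1}$ is equivalent to $A^1=A^2$, i.e.\ to $A$ being a match, which closes the biconditional.

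Once the ingredients are assembled the argument is essentially a single rewriting chain, so I expect no real difficulty in the main step. The only mild obstacle is the bookkeeping for freshness: I must make sure the hypothesis ``$\#$ does not occur in $A$'' is correctly propagated to $\#\notin A^1$ and $\#\notin\ol{A^2}$, as these side conditions are precisely what licenses the use of Fact~\ref{fact-CFP-palin}.
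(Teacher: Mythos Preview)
Your proposal is correct and follows exactly the paper's approach: the paper's proof is the one-liner ``Follows with Lemma~\ref{lem-CFP-0} and Facts~\ref{fact-CFP-palin} and~\ref{fact-CFP-rev},'' and your expansion uses precisely these three ingredients in the expected way. The freshness bookkeeping you flag is indeed the only minor side obligation, and it is as routine as you describe.
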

\begin{proof}
  Follows with
  Lemma~\ref{lem-CFP-0} and
  Facts~\ref{fact-CFP-palin} and~\ref{fact-CFP-rev}.
\qed\end{proof}

\begin{lemma}[][gamma_invol]\label{lem-CFP2}
  $\gamma(\gamma A)=A$ and
  $A\incl\gamma B\to\gamma A\incl B$.
\end{lemma}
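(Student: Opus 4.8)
The statement is a conjunction of two claims, and the plan is to prove the involution first and then derive the inclusion property from it.

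First I would establish $\gamma(\gamma A)=A$ by induction on the list $A$. Since $\gamma$ sends each card $x/y$ to $x/\ol y$, the composition $\gamma\circ\gamma$ sends $x/y$ to $x/\ol{\ol{y}}$, and Fact~\ref{fact-CFP-rev} gives $\ol{\ol{y}}=y$. Concretely, the base case $A=\nil$ is immediate, and in the step case $A=x/y::A'$ one computes $\gamma(\gamma(x/y::A'))=x/\ol{\ol{y}}::\gamma(\gamma A')=x/y::A'$, using $\ol{\ol{y}}=y$ together with the inductive hypothesis.

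For the second claim I would first record that $\gamma$ is monotone with respect to $\incl$: if $A\incl A'$, then $\gamma A\incl\gamma A'$. This holds because $\gamma$ is defined as a map, so a card belongs to $\gamma A'$ exactly when it has the form $x/\ol y$ for some $x/y\in A'$; every card of $\gamma A$ arises in this way from a card of $A$, which lies in $A'$ by assumption, hence in $\gamma A'$. Granting monotonicity, the second claim is immediate: assuming $A\incl\gamma B$, monotonicity yields $\gamma A\incl\gamma(\gamma B)$, and the first claim rewrites $\gamma(\gamma B)$ to $B$, so $\gamma A\incl B$.

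Neither part poses a real obstacle. The only genuine design choice is to obtain $A\incl\gamma B\to\gamma A\incl B$ from the involution together with monotonicity, rather than by a fresh induction on $A$ or $B$; this keeps the argument short and avoids re-deriving the membership bookkeeping. Beyond the reversal involution of Fact~\ref{fact-CFP-rev}, the only external ingredient is the standard characterisation of membership in a mapped list, which is exactly what underlies the monotonicity of $\gamma$.
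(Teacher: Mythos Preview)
Your argument is correct. The involution proof is exactly the paper's induction on $A$ with Fact~\ref{fact-CFP-rev}. For the inclusion part the paper's terse ``by induction on $A$'' suggests a direct induction (from $x/y\in\gamma B$ obtain $x/\ol y\in B$ via $\ol{\ol y}=y$, then use the inductive hypothesis on the tail), whereas you factor the same reasoning through monotonicity of the map $\gamma$ and then apply the involution to collapse $\gamma(\gamma B)$ to $B$. The two routes use the same ingredients and the same induction principle behind the scenes; your packaging just makes the dependence on the first claim explicit and avoids redoing the membership case analysis.
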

\begin{proof}
  By induction on $A$ using Fact~\ref{fact-CFP-rev}. 
\qed\end{proof}

\begin{theorem}[][PCP_CFP]
  PCP reduces to CFP.
\end{theorem}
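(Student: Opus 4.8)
The plan is to build the reduction from the auxiliary function $\gamma$. Given a PCP instance $P$, I fix an alphabet $\Sigma$ covering $P$ and set $\# := \fresh \Sigma$, so that by the characteristic property of $\fresh$ the symbol $\#$ does not occur in $P$, and hence does not occur in any stack $A\incl P$. The reduction maps $P$ to the Post grammar $(\gamma P,\#)$, and it remains to prove the correctness statement $\M{PCP}\,(P)\toot\M{CFP}\,(\gamma P,\#)$.

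For the forward direction, I assume a nonempty match $A\incl P$. Since $\#$ does not occur in $A$, Lemma~\ref{lem-CFP1} turns the match $A$ into the statement that $\sigma_\#(\gamma A)$ is a palindrome. I take $\gamma A$ as the witnessing derivation: it is nonempty because $\gamma$ preserves the length of a stack, and $\gamma A\incl\gamma P$ follows from $A\incl P$ by rewriting $P$ as $\gamma(\gamma P)$ via the first part of Lemma~\ref{lem-CFP2} and then applying its second part. This yields $\M{CFP}\,(\gamma P,\#)$.

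For the backward direction, I assume a nonempty $B\incl\gamma P$ with $\sigma_\#(B)$ a palindrome, and take $A := \gamma B$ as the witnessing match. By the second part of Lemma~\ref{lem-CFP2} we get $A\incl P$, and $A$ is nonempty since $\gamma$ preserves length; in particular $\#$ does not occur in $A$. Using the first part of Lemma~\ref{lem-CFP2} we have $\gamma A = \gamma(\gamma B) = B$, so $\sigma_\#(\gamma A) = \sigma_\#(B)$ is a palindrome, and Lemma~\ref{lem-CFP1} converts this back into the fact that $A$ is a match. This yields $\M{PCP}\,(P)$.

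The two translation lemmas (Lemmas~\ref{lem-CFP1} and~\ref{lem-CFP2}) do all the combinatorial work, so the remaining effort is bookkeeping: producing a symbol $\#$ that is genuinely fresh for $P$ and propagating this freshness to every subset $A\incl P$, and keeping the two inclusion directions straight when moving between $\gamma A\incl\gamma P$ and $A\incl P$. The one place where a naive attempt could stumble is the inclusion monotonicity of $\gamma$: it is not quite immediate from the definition of $\gamma$, but is obtained cleanly from the involution property $\gamma(\gamma A)=A$ together with the one-directional inclusion lemma, applied in both directions as above.
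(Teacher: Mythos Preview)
Your proposal is correct and follows essentially the same route as the paper: reduce $P$ to $(\gamma P,\#)$ with $\#$ fresh for $P$, use Lemma~\ref{lem-CFP1} to translate between matches and palindromes, and use the involution and inclusion parts of Lemma~\ref{lem-CFP2} to move between $A\incl P$ and $\gamma A\incl\gamma P$. Your explicit derivation of $\gamma A\incl\gamma P$ via $A\incl P=\gamma(\gamma P)$ and the inclusion direction of Lemma~\ref{lem-CFP2} is exactly what the paper's terse ``follows with Lemma~\ref{lem-CFP2}'' unpacks to.
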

\begin{proof}
  Let $P$ be a list of cards.
  We fix a symbol $\#$ that is not in $P$
  and show ${\M{PCP}\,(P)\toot\M{CFP}\,(\gamma P,\#)}$.

  Let $A\incl P$ be a nonempty match.
  It suffices to show that $\gamma A\incl\gamma P$
  and $\sigma_\#(\gamma A)$ is a palindrome.
  The first claim follows with Lemma~\ref{lem-CFP2},
  and the second claim follows with Lemma~\ref{lem-CFP1}.

  Let $B\incl\gamma P$ be a nonempty stack
  such that $\sigma_\#B$ is a palindrome.
  By Lemma~\ref{lem-CFP2} we have $\gamma B\incl P$
  and  $B=\gamma(\gamma B)$.
  Since $\gamma B$ matches by Lemma~\ref{lem-CFP1},
  we have $\M{PCP}\,(P)$.
\qed\end{proof}

\section{PCP to CFI}
\setCoqFilename{PCP_CFI}

We show that PCP reduces to CFI (the intersection problem for Post grammars).
The basic idea is that a stack
$A=[x_1/y_1,\ldots,x_n/y_n]$ with $n\ge1$ matches if and only if the
string
$$x_1\cdots x_n\#\,x_n\#y_n\#\cdots\# x_1\#y_1\#$$
equals the string
$$y_1\cdots y_n\#\,x_n\#y_n\#\cdots\# x_1\#y_1\#$$
provided the symbol $\#$ does not occur in $A$.
Moreover, strings of these forms can be generated
by the Post grammars
$([\,x/x\#y\#\mid x/y\in A\,],\,\#)$ and
$([\,y/x\#y\#\mid x/y\in A\,],\,\#)$, respectively.

We fix a symbol $\#$ and formalise
the observations with two functions
\begin{align*}
  \coqlink[gamma1]{\N{\gamma_1A}}&:=\mset{x/x\#y\#}{x/y\in A}
  &\coqlink[gamma2]{\N{\gamma_2A}}&:=\mset{y/x\#y\#}{x/y\in A}
\end{align*}
and a function~\coqlink[gamma]{\emph{$\gamma A$}} defined as follows:
\begin{align*}
  \gamma\nil&:=\nil   \\ \gamma(x/y::A)&:=(\gamma A)x\#y\#
\end{align*}

\begin{lemma}[][sigma_gamma1]\label{lem-CFI-1}
  $\sigma_\#(\gamma_1 A)=A^1\#\,(\gamma A)$ and
  $\sigma_\#(\gamma_2 A)=A^2\#\,(\gamma A)$.
\end{lemma}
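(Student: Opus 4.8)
The plan is to prove both identities simultaneously by induction on the stack $A$, exploiting the fact that the two statements are essentially symmetric. The functions $\gamma_1$ and $\gamma_2$ differ only in the \emph{upper} string of each card they produce (they keep $x$ respectively $y$), while both produce the \emph{same} lower string $x\#y\#$; and this common lower string is exactly what the auxiliary function $\gamma$ accumulates as a suffix. I would therefore carry out the first identity in full and obtain the second by the identical computation, with the trace $A^1$ replaced by $A^2$. For the base case $A=\nil$ both $\gamma_1\nil$ and $\gamma\nil$ reduce to the empty list, so the left-hand side is $\sigma_\#\nil=\#$ and the right-hand side is $\nil^1\,\#\,(\gamma\nil)=\epsilon\,\#\,\epsilon=\#$, and the two agree.

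For the inductive step $A=x/y::A'$ I would simply unfold the three definitions. Since $\gamma_1$ maps over cards, $\gamma_1(x/y::A')=(x/x\#y\#)::\gamma_1A'$, and the defining equation of $\sigma_\#$ applied to the card $x/x\#y\#$ gives
$$\sigma_\#(\gamma_1(x/y::A'))=x\,\bigl(\sigma_\#(\gamma_1A')\bigr)\,x\#y\#.$$
Applying the induction hypothesis $\sigma_\#(\gamma_1A')=A'^1\,\#\,(\gamma A')$ and then using associativity of concatenation, this becomes $x\,A'^1\,\#\,(\gamma A')\,x\#y\#$. On the other side, $(x/y::A')^1=x\,(A'^1)$ and $\gamma(x/y::A')=(\gamma A')\,x\#y\#$ by definition, so the right-hand side is $x\,(A'^1)\,\#\,(\gamma A')\,x\#y\#$ — the same string. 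The second identity goes through verbatim: the only change is that the card produced by $\gamma_2$ is $y/x\#y\#$, so its upper string $y$ plays the role of $x$ and the lower trace $A^2$ replaces $A^1$, while the appended suffix $x\#y\#$ and hence $\gamma$ are untouched.

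I do not expect any genuine obstacle here, since the whole argument is a definitional unfolding under induction. The only care needed is the bookkeeping of the three interacting definitions $\sigma_\#$, the traces $(\cdot)^1$ and $(\cdot)^2$, and $\gamma$, together with the observation that the shared lower string $x\#y\#$ of the $\gamma_1$- and $\gamma_2$-cards is precisely the suffix appended by $\gamma$. This shared-suffix structure is exactly what makes both halves collapse onto the same $\gamma A$ on the right, and it is the reason the lemma can state both projections using a single function $\gamma$.
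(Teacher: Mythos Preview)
Your proof is correct and follows exactly the approach indicated in the paper, which simply records ``By induction on $A$.'' Your detailed unfolding of the base case and the inductive step is precisely what that one-line proof expands to.
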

\begin{proof}
  By induction on $A$. 
\qed\end{proof}

\begin{lemma}[][gamma1_spec]\label{lem-CFI-2}
  Let $B\incl\gamma_i\,C$.
  Then there exists $A\incl C$ such that $\gamma_i\,A=B$.
\end{lemma}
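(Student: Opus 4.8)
The plan is to proceed by induction on $B$, treating $i \in \{1,2\}$ uniformly, since the cases $\gamma_1$ and $\gamma_2$ differ only in which component of the preimage card becomes the upper string. The list $C$ stays fixed throughout, so no generalisation is needed.

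For the base case $B = \nil$, I would take $A := \nil$. Then $\nil \incl C$ holds trivially, and $\gamma_i\nil = \nil = B$ by definition, so both conjuncts are immediate.

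For the step case $B = c :: B'$, the hypothesis $B \incl \gamma_i C$ splits into $c \in \gamma_i C$ and $B' \incl \gamma_i C$. Since $\gamma_i C$ is defined as a map over the cards of $C$, membership $c \in \gamma_i C$ yields a preimage card $x/y \in C$ whose $\gamma_i$-image is exactly $c$ (that is, $c = x/x\#y\#$ when $i=1$, and $c = y/x\#y\#$ when $i=2$). Applying the induction hypothesis to $B' \incl \gamma_i C$ gives a stack $A' \incl C$ with $\gamma_i A' = B'$. I would then set $A := (x/y) :: A'$. The inclusion $A \incl C$ follows from $x/y \in C$ together with $A' \incl C$, and $\gamma_i A = c :: \gamma_i A' = c :: B' = B$ follows by unfolding $\gamma_i$ on a cons.

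This induction is routine, and there is no real obstacle. The only step worth highlighting is the extraction of the preimage card from $c \in \gamma_i C$, which is immediate from the characterisation of membership in a mapped list $[\,f s \mid s \in L\,]$ by some witness $s \in L$ with $f s$ equal to the given member. Everything else is bookkeeping on the inclusion relation, for which duplicates cause no difficulty since $\incl$ only requires pointwise membership.
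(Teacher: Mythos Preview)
Your argument is correct and follows the same overall structure as the paper's proof: induction on $B$, with the preimage card in the step case extracted from the membership $c\in\gamma_iC$. The paper's one-line proof additionally invokes Fact~\ref{fact-prelim-split} (the splitting lemma for strings with a fresh separator), whereas you obtain the witness $x/y\in C$ directly from the characterisation of membership in a mapped list; at this informal level your route is slightly more direct and the splitting lemma is not actually needed.
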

\begin{proof}
  By induction on $B$ using Fact~\ref{fact-prelim-split}. 
\qed\end{proof}

\begin{lemma}[][gamma_inj]\label{lem-CFI-3}
  Let $\#$ not occur in $A_1$ and $A_2$.
  Then $\gamma A_1=\gamma A_2$ implies $A_1=A_2$.
\end{lemma}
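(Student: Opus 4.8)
The plan is to prove the statement by induction on $A_1$, with $A_2$ generalised, exploiting that $\#$ occurs in $\gamma A_i$ only as a separator between the component strings. The one structural subtlety is that the defining equation $\gamma(x/y::A)=(\gamma A)\,x\#y\#$ appends the block of the head card $x/y$ at the \emph{right} end of the string, so peeling the head of the list corresponds to peeling a suffix of the string. Since the only splitting tool available, Fact~\ref{fact-prelim-split}, isolates the \emph{first} occurrence of a symbol, I would convert suffix-splitting into prefix-splitting by reversing and using Fact~\ref{fact-CFP-rev}.

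In the base case $A_1=\nil$ we have $\gamma A_1=\nil$; since $\gamma(x/y::A)$ always ends in $\#$ and is therefore nonempty, the hypothesis $\gamma A_1=\gamma A_2$ forces $A_2=\nil=A_1$. In the step case $A_1=x_1/y_1::A_1'$ the string $\gamma A_1$ is nonempty, so $A_2=x_2/y_2::A_2'$ as well, and the hypothesis becomes $(\gamma A_1')\,x_1\#y_1\#=(\gamma A_2')\,x_2\#y_2\#$. Reversing both sides with Fact~\ref{fact-CFP-rev} yields $\#\,\ol{y_1}\,\#\,\ol{x_1}\,\ol{\gamma A_1'}=\#\,\ol{y_2}\,\#\,\ol{x_2}\,\ol{\gamma A_2'}$. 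Cancelling the leading $\#$ and applying Fact~\ref{fact-prelim-split} at the next $\#$ (legitimate because $\#\notin y_1,y_2$ gives $\#\notin\ol{y_1},\ol{y_2}$) produces $\ol{y_1}=\ol{y_2}$, hence $y_1=y_2$ by Fact~\ref{fact-CFP-rev}, together with $\ol{x_1}\,\ol{\gamma A_1'}=\ol{x_2}\,\ol{\gamma A_2'}$.

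Because $\gamma A_i'$ ends in $\#$ exactly when $A_i'$ is nonempty, I would now distinguish whether the tails are empty. If both are empty the remaining equation is $\ol{x_1}=\ol{x_2}$, giving $x_1=x_2$ and $A_1'=A_2'=\nil$. If both are nonempty, each $\ol{\gamma A_i'}$ begins with $\#$, so a second use of Fact~\ref{fact-prelim-split} (using $\#\notin x_1,x_2$) gives $x_1=x_2$ and $\gamma A_1'=\gamma A_2'$, whence the induction hypothesis yields $A_1'=A_2'$. The mixed case is impossible, since one side would then contain a $\#$ and the other would not. In every case $x_1=x_2$, $y_1=y_2$, and $A_1'=A_2'$, so $A_1=A_2$.

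The main obstacle is precisely this mismatch between the direction of the list recursion and the direction in which the string is assembled: the head card lands at the right end, so the naive left-to-right reading of the equation exposes the last card rather than the head, and one must reverse (or otherwise peel from the right) before Fact~\ref{fact-prelim-split} applies. The attendant bookkeeping—cancelling the leading separators and the case split on whether $A_1'$ and $A_2'$ are empty, needed to guarantee that a separating $\#$ is actually present before each application of the splitting fact—is the only genuinely delicate part; everything else is routine.
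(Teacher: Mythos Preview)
Your argument is correct and matches the paper's one-line sketch (``By induction on $A_1$ using Fact~\ref{fact-prelim-split}''): you have simply spelled out the details the paper omits, in particular the reversal needed to turn the right-appended block into a left prefix so that Fact~\ref{fact-prelim-split} applies, and the small case split on whether the tails are empty. Nothing here deviates from the paper's approach.
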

\begin{proof} 
  By induction on $A_1$ using Fact~\ref{fact-prelim-split}.
\qed\end{proof}

\begin{theorem}[][reduction]
  PCP reduces to CFI.
\end{theorem}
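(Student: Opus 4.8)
The plan is to take as the reduction function the map sending a list of cards $P$ to the triple $(\gamma_1 P, \gamma_2 P, \#)$, where $\#$ is chosen not to occur in $P$, and to prove $\M{PCP}(P) \toot \M{CFI}(\gamma_1 P, \gamma_2 P, \#)$. Almost all of the genuine content sits in the three preceding lemmas, so the theorem itself will be assembly; the work is in lining up their hypotheses correctly.

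For the direction from left to right, let $A \incl P$ be a nonempty match, so $A^1 = A^2$. I would offer $\gamma_1 A$ and $\gamma_2 A$ as the two witnessing stacks. Both are nonempty since $A$ is and since $\gamma_i$ preserves length, and both satisfy the required inclusions $\gamma_1 A \incl \gamma_1 P$ and $\gamma_2 A \incl \gamma_2 P$ because $A \incl P$ and $\gamma_i$ acts cardwise. Lemma~\ref{lem-CFI-1} rewrites the two projections as $\sigma_\#(\gamma_1 A) = A^1 \# (\gamma A)$ and $\sigma_\#(\gamma_2 A) = A^2 \# (\gamma A)$, which coincide because $A^1 = A^2$. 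This gives $\M{CFI}(\gamma_1 P, \gamma_2 P, \#)$.

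For the converse, assume nonempty stacks $A_1 \incl \gamma_1 P$ and $A_2 \incl \gamma_2 P$ with $\sigma_\# A_1 = \sigma_\# A_2$. First I would pull these back along $\gamma_1, \gamma_2$: by Lemma~\ref{lem-CFI-2} there are $B_1, B_2 \incl P$ with $\gamma_1 B_1 = A_1$ and $\gamma_2 B_2 = A_2$, and $B_1, B_2$ are again nonempty and free of $\#$ (being sublists of $P$). Applying Lemma~\ref{lem-CFI-1} to the equation $\sigma_\# A_1 = \sigma_\# A_2$ turns it into $B_1^1 \# (\gamma B_1) = B_2^2 \# (\gamma B_2)$.

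The decisive step, which I expect to be the main obstacle, is to split this single string equation back into its two components. Since $\#$ appears in neither $B_1^1$ nor $B_2^2$, Fact~\ref{fact-prelim-split} applied with $a := \#$ delivers simultaneously $B_1^1 = B_2^2$ and $\gamma B_1 = \gamma B_2$; the injectivity Lemma~\ref{lem-CFI-3}, whose hypothesis that $\#$ is absent is again met, then upgrades the latter to $B_1 = B_2$. Setting $A := B_1$ gives $A^1 = B_1^1 = B_2^2 = A^2$, so $A$ is a nonempty match with $A \incl P$, hence $\M{PCP}(P)$. The only real care lies in tracking precisely where $\#$ does and does not occur, since this is exactly what licenses both the splitting and the injectivity argument.
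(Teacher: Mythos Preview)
Your proposal is correct and follows the paper's proof essentially step for step: the same reduction function $P \mapsto (\gamma_1 P,\gamma_2 P,\#)$, the same witnesses $\gamma_1 A$, $\gamma_2 A$ in the forward direction, and the same backward argument via Lemma~\ref{lem-CFI-2}, Lemma~\ref{lem-CFI-1}, Fact~\ref{fact-prelim-split}, and Lemma~\ref{lem-CFI-3}. The only difference is a swap of the names $A_i$ and $B_i$ relative to the paper.
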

\begin{proof}
  Let $P$ be a list of cards.
  We fix a symbol $\#$ not occurring in $P$
  and define $R_1:=\gamma_1P$ and $R_2:=\gamma_2P$.
  We show $\M{PCP}\,(P)\toot\M{CFI}\,(R_1,R_2,\#)$.

  Let $A\incl P$ be a nonempty match.
  Then $\gamma_1A\incl R_1$, $\gamma_2A\incl R_2$,
  and $\sigma_\#(\gamma_1A)=\sigma_\#(\gamma_2A)$
  by Lemma~\ref{lem-CFI-1}.

  Let $B_1\incl R_1$ and $B_2\incl R_2$
  be nonempty lists such that
  $\sigma_\#B_1=\sigma_\#B_2$.
  By Lemma~\ref{lem-CFI-2}
  there exist nonempty stacks $A_1,A_2\incl P$
  such that $\gamma_i(A_i)=B_i$.
  By Lemma~\ref{lem-CFI-1} we have
  $A_1^1\#(\gamma A_1)=A_2^2\#(\gamma A_2)$.
  By Fact~\ref{fact-prelim-split} we have
  $A_1^1=A_2^2$ and $\gamma A_1=\gamma A_2$.
  Thus $A_1=A_2$ by Lemma~\ref{lem-CFI-3}.
  Hence $A_1\incl P$ is a nonempty match.
\qed\end{proof}

Hopcroft et al.~\cite{hopcroft} give a reduction of PCP to CFI by using
grammars equivalent to the following Post grammars:
\begin{align*}
  {\N{\gamma_1A}}&:=\mset{x/i}{x/y\in A \text{ at position } i}
  &{\N{\gamma_2A}}&:=\mset{y/i}{x/y\in A \text{ at position } i}
\end{align*}
While being in line with the presentation of PCP with indices, it
complicates both the formal definition and the verification.

Hesselink~\cite{hesselink} directly reduces CFP to CFI for general
context-free grammars, making the reduction PCP to CFI redundant.
The idea is that a context-free grammar over $\Sigma$ contains a palindrome if and only if its
intersection with the context-free grammar of all palindromes over
$\Sigma$ is non-empty.
\setCoqFilename{CFP_CFI}
We give a \coqlink[CFP_CFI]{formal proof of this statement} using a definition
of context-free rewriting with explicit alphabets.

For Post grammars, CFP is not reducible to CFI, because the language
of all palindromes is not expressible by a Post grammar.

\section{TM to SRH}
\label{sec:TM-SRH}
\setCoqFilename{TM_SRH}
\newcommand{\none}{\ensuremath{\bot}}
\newcommand{\ltape}{\ensuremath{\llparenthesis}} 
\newcommand{\rtape}{\ensuremath{\rrparenthesis}} 

A Turing machine, independent from its concrete type-theoretic
definition, always consists of an alphabet $\Sigma$, a finite
collection of states $Q$, an initial state~$q_0$, a collection of
halting states $H \subseteq Q$, and a step function which controls the
behaviour of the head on the tape. The halting problem for Turing
machines TM then asks whether a Turing machine $M$ reaches a final
state when executed on a tape containing a string $x$.

In this section, we briefly report on our formalisation of a reduction
from TM to SRH following ideas from Hopcroft et al.~\cite{hopcroft}.
In contrast to the other sections, we omit the technical details
of the proof, because there are abundantly many, and none of them is
interesting from a mathematical standpoint.  We refer the interested
reader to~\cite{edith} for all details.

In the development, we use a formal definition of Turing machines
from Asperti and Ricciotti~\cite{aspertimulti}.

To reduce TM to SRH, a representation of configurations $c$ of Turing
machines as strings $\langle c \rangle$ is needed. Although the
content of a tape can get arbitrarily big over the run of a machine,
it is finite in every single configuration. It thus suffices to
represent only the part of the tape that the machine has previously
written to.

We write the current state to the left of the currently read symbol and,
following~\cite{aspertimulti}, distinguish four non-overlapping
situations: The tape is empty ($q\ltape \rtape$), the tape contains
symbols and the head reads one of them ($\ltape x qa y\rtape $), the
tape contains symbols and the head reads none of them, because it is
in a left-overflow position where no symbol has been written before
($q\ltape a x\rtape $) or the right-overflow counterpart of the latter
situation ($\ltape xaq\rtape$).  Note the usage of left and right
markers to indicate the end of the previously written part.

The reduction from TM to SRH now works in three steps. Given a Turing
machine $M$, one can define whether a configuration $c'$ is reachable
from a configuration $c$ using its transition
function~\cite{aspertimulti,edith}. First, we translate the transition
function of the Turing machine into a string rewriting system using
the translation scheme depicted in Table~\ref{table-TM}.

\aboverulesep=0ex
\belowrulesep=0ex
\renewcommand{\arraystretch}{1.1}
\begin{table}[H]
    \caption{Rewriting rules $x/y$ in $R$ if the machine according to
    its transition function in state $q_1$ continues in $q_2$ and
    reads, writes and moves as indicated. For example, if the
    transition function of the machine indicates that in state $q_1$ if
    symbol $a$ is read, the machine proceeds to state $q_2$, writes
    nothing and moves to the left, we add the rule $\ltape q_1a \, / \,
    q_2\ltape a$ and rules $c q_1 a \,/\, q_2 c a $ for every $c$ in
    the alphabet.}
  \label{table-TM}
  \centering\arrayrulecolor{gray}
  \begin{tabular}{l |l |l |l |l |l |l |l |l} \midrule
    Read\; & Write\; & Move\; & $x$ & $y$ & $x$ & $y$ & $x$ & $y$ \\ \arrayrulecolor{gray}\toprule

    $\bot$ & $\bot$ & $L$ & $q_1\ltape$ & $q_2\ltape$ & $a\,q_1\rtape$ & $q_2\,a\rtape$ & & \\ \midrule
    $\bot$ & $\bot$ & $N$ & $q_1\ltape$ & $q_2\ltape$ & $q_1\rtape$ & $q_2\rtape$ & & \\ \midrule
    $\bot$ & $\bot$ & $R$ & $q_1\ltape\rtape$ & $q_2\ltape\rtape$ & $q_1\rtape$ & $q_2\rtape$ &$q_1\ltape \,a$ & $\ltape\,q_1\,a$ \\ \midrule
    
    $\bot$ & $\some b$ & $L$ & $q_1\ltape$ & $q_2\ltape b$ & $a\,q_1\rtape$ & $q_2\,a\,b\rtape$ & & \\ \midrule
    $\bot$ & $\some b$ & $N$ & $q_1\ltape$ & $\ltape q_2\,b$ &  $q_1\rtape$ & $q_2\,b\rtape$ & & \\ \midrule
    $\bot$ & $\some b$ & $R$ & $q_1\ltape$ & $\ltape b\,q_2$ &  $q_1\rtape$ & $b\,q_2\rtape$ & & \\ \midrule
    
    $\some a$ & $\bot$ & $L$ & $\ltape q_1\,a$ & $q_2\ltape a$ & $c\,q_1\,a$ & $q_2\,c\,a$ & & \\ \midrule
    $\some a$ & $\bot$ & $N$ & $q_1\,a$ & $q_2\,a$ & & & & \\ \midrule
    $\some a$ & $\bot$ & $R$ & $q_1\,a$ & $a\,q_2$ & & & & \\ \midrule
    
    $\some a$ & $\some b$ & $L$ & $\ltape q_1\,a$ & $q_2\ltape b$ & $c\,q_1\,a$ & $q_2\,c\,b$ & & \\ \midrule
    $\some a$ & $\some b$ & $N$ & $q_1\,a$ & $q_2\,b$ & & & & \\ \midrule
    $\some a$ & $\some b$ & $R$ & $q_1\,a$ & $b\,q_2$ & & & & \\ \midrule
  \end{tabular}
  \vspace{1mm}
\end{table}

\begin{lemma}[][reduction_reach_]
  For all Turing machines $M$ and configurations $c$ and $c'$ there is
  a SRS $R$ such that
  $\langle c \rangle \succ^*_R \langle c' \rangle$ if and only if the
  configuration $c'$ is reachable from the configuration $c$ by the
  machine $M$.
\end{lemma}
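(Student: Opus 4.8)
The plan is to construct the rewriting system $R$ directly from the step function of $M$ following the translation scheme of Table~\ref{table-TM}, and then to reduce the claimed equivalence between the reflexive-transitive closures $\succ^*_R$ and machine reachability to a single-step correspondence. Concretely, I would first isolate a one-step lemma: for a configuration $c$ with encoding $\langle c\rangle$, a single rewriting step $\langle c\rangle\succ_R s$ holds if and only if $s=\langle c'\rangle$ for some $c'$ that the machine reaches from $c$ in one step. Given such a lemma, lifting to $\succ^*_R$ is routine in both directions using transitivity of $\succ^*_R$ (Fact~\ref{fact-SR-basics}~(1)).

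For the completeness direction of the one-step lemma (machine step $\Rightarrow$ rewriting step) the argument is direct: a machine step is determined by the current state $q_1$, the read symbol, the written symbol, and the move direction, so I would read off the matching row of Table~\ref{table-TM}, instantiate it with the appropriate surrounding tape symbol when the row demands it, and exhibit the context $u/v$ so that $\langle c\rangle=u\,x\,v\succ_R u\,y\,v=\langle c'\rangle$. This amounts to a finite case distinction over the twelve rows crossed with the four shapes of configuration encodings (empty tape, head over a symbol, left-overflow, and right-overflow).

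The soundness direction (rewriting step $\Rightarrow$ machine step) is the \emph{main obstacle}, and it is where the bulk of the ``abundantly many'' technical details live. String rewriting may in principle apply a rule at any position, so I must show that on a valid encoding $\langle c\rangle$ the only positions at which a left-hand side of $R$ matches are those straddling the state symbol, and that the result is again a valid encoding $\langle c'\rangle$ with $c'$ a one-step successor of $c$. Here I would exploit that the three kinds of symbols --- states, the end markers $\ltape,\rtape$, and genuine tape symbols --- are pairwise disjoint, and that every left-hand side in Table~\ref{table-TM} contains exactly one state symbol; splitting the string around this unique occurrence (Fact~\ref{fact-prelim-split}) pins down the context and forces the match to coincide with a machine step. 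The case analysis mirrors the one for completeness but now additionally has to rule out spurious matches, which is the delicate part.

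Finally, I would assemble the two halves and lift to the closures. For completeness over $\succ^*_R$, induct on machine reachability and chain single steps with Fact~\ref{fact-SR-basics}~(1). For soundness over $\succ^*_R$, I would apply the strengthened induction principle stated after Fact~\ref{fact-SR-basics} with the target fixed at $\langle c'\rangle$, carrying the predicate ``every string that is a valid encoding $\langle c_x\rangle$ satisfies that $c'$ is reachable from $c_x$''; each induction step invokes single-step soundness to pass from a valid encoding through one rewrite to a valid successor encoding, so the invariant that intermediate strings remain valid encodings is maintained automatically, and the base case uses injectivity of $\langle\cdot\rangle$ to recover reachability in zero steps.
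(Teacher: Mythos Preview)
Your plan is sound and matches what the paper does. The paper deliberately omits the proof details here, characterising the argument as ``mainly a big case distinction over all possible shapes of configurations of a machine, which leads to a combinatorial explosion and a vast amount of subcases''; your decomposition into a one-step correspondence (established by exhaustive case analysis over Table~\ref{table-TM} crossed with the four tape shapes) followed by a routine lift to the reflexive--transitive closures is exactly this. The observation that every left-hand side in $R$ carries a unique state symbol, and that this pins down the rewrite position inside a valid encoding, is the right lever for the soundness direction.

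One minor point of divergence worth noting: in the Coq development the authors first carry out the argument for a variant of string rewriting with \emph{explicit} alphabets (finite types), and only afterwards transport the result to the number-based version of $\M{SR}$ used elsewhere in the paper. This extra step is a formalisation convenience --- it lets the disjointness of state symbols, end markers, and tape symbols be discharged by type-level inversion rather than by arithmetic reasoning about $\M{fresh}$ --- and does not affect the mathematical structure of your argument.
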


In the development, we first reduce to a version of string rewriting
with explicit alphabets, and then \coqlink[reduction]{reduce this version} to string
rewriting as defined before.

This proof is by far the longest in our development. In its essence,
it is only a shift of representation, making explicit that transition
functions encode a rewriting relation on configurations. The proof
is mainly a big case distinction over all possible shapes of
configurations of a machine, which leads to a combinatorial explosion
and a vast amount of subcases.
The proof does, however, not contain any surprises or insights.

Note that, although we work with deterministic machines in the Coq
development, the translation scheme described in Table~\ref{table-TM}
also works for nondeterministic Turing machines.

The second step of the reduction is to incorporate the set of halting
states~$H$.  We define an intermediate problem $\M{SRH}'$,
generalising the definition of $\M{SRH}$ to strings:

\[\M{SRH}'(R, x, z) := \exists y.~x \succ^*_R y \land \exists a \in
  z.~a \in y \]

Note that $\M{SRH}(R,x,a) \toot \M{SRH}'(R,x,[a])$.
TM can then easily be reduced to $\M{SRH}'$:

\begin{lemma}[][halt_SRH']
  $\M{TM}$ reduces to $\M{SRH}'$.
\end{lemma}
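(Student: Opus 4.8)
The plan is to reduce $\M{TM}$ to $\M{SRH}'$ by composing the encoding $\langle\cdot\rangle$ of configurations with \coqlink[reduction_reach_]{the reachability lemma} above. Given a Turing machine $M$ and an input $x$, let $c_0$ be the initial configuration of $M$ on $x$ (state $q_0$, head at the start of $x$) and let $R$ be the string rewriting system that the reachability lemma associates with $M$. Because the encoding $\langle c\rangle$ always writes the current state as a symbol immediately to the left of the read symbol, a configuration is in a halting state exactly when its encoding contains one of the symbols denoting a state in $H$. I would therefore take $z$ to be the list of these halting-state symbols and define the reduction to map the $\M{TM}$-instance $(M,x)$ to the $\M{SRH}'$-instance $(R,\langle c_0\rangle,z)$. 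It then remains to prove $\M{TM}(M,x)\toot\M{SRH}'(R,\langle c_0\rangle,z)$.

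For the forward direction, suppose $M$ halts on $x$. Then some configuration $c'$ reachable from $c_0$ has its state in $H$. By the reachability lemma $\langle c_0\rangle\succ^*_R\langle c'\rangle$, and by the choice of $z$ the state symbol occurring in $\langle c'\rangle$ lies in both $z$ and $\langle c'\rangle$. Taking $y:=\langle c'\rangle$ witnesses $\M{SRH}'(R,\langle c_0\rangle,z)$.

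For the backward direction, suppose $\M{SRH}'(R,\langle c_0\rangle,z)$ holds, so there is a string $y$ with $\langle c_0\rangle\succ^*_R y$ and a halting-state symbol $a\in z$ with $a\in y$. The subtlety here, and the step I expect to be the main obstacle, is that $y$ need not \emph{a priori} be a configuration encoding, whereas the reachability lemma only relates encodings. I would discharge this with the invariant that $\succ^*_R$ sends configuration encodings to configuration encodings: since every rule of $R$ faithfully simulates a single transition step, any string reachable from the well-formed $\langle c_0\rangle$ is again of the form $\langle c'\rangle$ for some reachable $c'$. The reachability lemma then gives that $c'$ is reachable from $c_0$, and since the halting-state symbol $a$ occurs in $\langle c'\rangle$, the state of $c'$ lies in $H$, so $M$ halts on $x$.

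Thus the mathematical content is carried entirely by the reachability lemma, and the remaining work is the encoding-preservation invariant. I expect this invariant to be conceptually routine but combinatorially heavy, since, like the reachability lemma itself, it must check the four shapes of configuration encodings against every rewrite rule of Table~\ref{table-TM}; in keeping with the rest of this section I would carry out that case analysis in the Coq development rather than on paper.
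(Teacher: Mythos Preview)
Your proposal is correct and follows the same reduction as the paper: map $(M,x)$ to $(R,\langle c_0\rangle,z)$ with $z$ the list of halting-state symbols, and appeal to the reachability lemma. Your explicit treatment of the backward direction---the encoding-preservation invariant ensuring that any $y$ with $\langle c_0\rangle\succ^*_R y$ is itself some $\langle c'\rangle$---is a genuine detail that the paper's one-line proof suppresses but which is indeed needed (and is exactly the kind of case analysis the paper defers to the Coq development); note that you also implicitly rely on state symbols being disjoint from tape symbols and markers so that $a\in z\cap\langle c'\rangle$ forces the state of $c'$ to lie in $H$.
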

\begin{proof}
  Given a Turing machine $M$ and a string $x$, $M$ accepts $x$ if and
  only if $\M{SRH}(R,q_0\ltape x \rtape, z)$, where $R$ is the system
  from the last lemma, $q_0$ is the starting state of $M$ and $z$ is a
  string containing exactly all halting states of $M$.
\qed\end{proof}

Third, we can reduce $\M{SRH}'$ to $\M{SRH}$:

\begin{lemma}[][SRH'_SRH]
  $\M{SRH}'$ reduces to $\M{SRH}$.
\end{lemma}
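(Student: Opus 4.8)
The plan is to reduce the check ``some symbol of $z$ is reachable'' to the check ``one fixed fresh symbol is reachable'' by adding, for every candidate symbol, a rewriting rule that collapses it to the fresh symbol. Concretely, given an instance $(R,x,z)$ of $\M{SRH}'$ I fix an alphabet $\Sigma$ covering $R$, $x$, and $z$, choose the fresh symbol $c := \fresh\Sigma$, which satisfies $c \notin \Sigma$, and define
\[ R' := R \cup \mset{b/c}{b \in z}. \]
The reduction function maps $(R,x,z)$ to the $\M{SRH}$-instance $(R',x,c)$, and it remains to show $\M{SRH}'(R,x,z) \toot \M{SRH}(R',x,c)$. This mirrors the structure of the $\M{SRH}\preceq\M{SR}$ reduction of Section~3: the new rules play the role of the collapsing rules there.

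For the forward direction, assume $x \succ^*_R y$ with $b \in y$ for some $b \in z$. Since $R \incl R'$, Fact~\ref{fact-SR-basics}~(3) gives $x \succ^*_{R'} y$. Writing $y = ubv$, one application of the rule $b/c \in R'$ yields $y \succ_{R'} ucv$, and $c \in ucv$, so $x \succ^*_{R'} ucv$ witnesses $\M{SRH}(R',x,c)$.

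The backward direction is the main obstacle, because a derivation $x \succ^*_{R'} y$ with $c \in y$ may pass through intermediate strings that already contain $c$, so a naive induction with a general start string breaks in the reflexive case. The fix is to prove the \emph{inductive invariant}: for all $w,w'$, if $w \succ^*_{R'} w'$ and $c \in w'$, then $c \in w$ or $\M{SRH}'(R,w,z)$. I would prove this by induction on $w \succ^*_{R'} w'$. In the reflexive case $c \in w$ holds directly. In a step $w \succ_{R'} w''$, I distinguish the rule used: if it is a rule $b/c$ with $b \in z$, then $b \in w$ and $w \succ^*_R w$ already witnesses $\M{SRH}'(R,w,z)$; if it is a rule of $R$, then, $c$ being fresh, it occurs on neither side of that rule, so from the induction hypothesis either $\M{SRH}'(R,w'',z)$ (which extends to $w$ by prepending the step $w \succ_R w''$) or $c \in w''$, and in the latter case every occurrence of $c$ in $w''$ must already be present in the unchanged context of $w$, giving $c \in w$. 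Applying the invariant to the original derivation $x \succ^*_{R'} y$ and using $c \notin x$ (which holds since $\Sigma$ covers $x$ and $c = \fresh\Sigma \notin \Sigma$) rules out the left disjunct and yields $\M{SRH}'(R,x,z)$, as required.
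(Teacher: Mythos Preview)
Your construction is exactly the one the paper uses (the paper's proof merely states the reduction $(R,x,z)\mapsto(R\cup[a/\#\mid a\in z],x,\#)$ without further detail), and your elaboration of the backward direction via the inductive invariant ``$c\in w$ or $\M{SRH}'(R,w,z)$'' is a correct way to make the argument precise. The only cosmetic difference is that you include $z$ in the covering alphabet while the paper does not; this is harmless, since freshness of $c$ with respect to $R$ and $x$ is all that is actually used.
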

\begin{proof}
  Given a SRS $R$, a string $x$ and a string $z$, we first
  fix an alphabet $\Sigma$ covering $R$ and $x$, and a fresh symbol
  $\#$. We then have $\M{SRH'}(R,x,z)$ if and only if
  $\M{SRH}(R \app \left[ a/\# \mid a \in z\right],x,\#)$.
\qed\end{proof}

All three steps combined yield:

\begin{theorem}[][Halt_SRH]
  $\M{TM}$ reduces to $\M{SRH}$.
\end{theorem}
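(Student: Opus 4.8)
The plan is to derive this theorem as an immediate consequence of the two lemmas just proved, using transitivity of reductions. Those lemmas establish $\M{TM}\preceq\M{SRH}'$ and $\M{SRH}'\preceq\M{SRH}$, and the transitivity fact of Section~2 composes them. Concretely, I would apply transitivity with $\M{SRH}'$ as the intermediate problem: instantiating $p:=\M{TM}$, $q:=\M{SRH}'$, $r:=\M{SRH}$ reduces the goal $\M{TM}\preceq\M{SRH}$ to the two subgoals $\M{TM}\preceq\M{SRH}'$ and $\M{SRH}'\preceq\M{SRH}$, each discharged verbatim by the corresponding lemma. No additional argument is needed at this level, so the theorem itself is essentially a one-line composition.

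The genuine difficulty lies entirely below this composition, in the first of the three steps: the lemma translating the transition function into a string rewriting system (the configuration-reachability lemma). That lemma underpins the reduction function for $\M{TM}\preceq\M{SRH}'$ by showing that the system $R$ built from Table~\ref{table-TM} simulates reachability exactly, i.e. $\langle c\rangle\succ^*_R\langle c'\rangle$ holds precisely when $c'$ is reachable from $c$. As the paper stresses, this is the longest proof in the development, and the main obstacle is combinatorial rather than conceptual: one must case-split over the four admissible configuration shapes ($q\ltape\rtape$, $\ltape x q a y\rtape$, $q\ltape a x\rtape$, $\ltape x a q\rtape$) and, for each, over every read/write/move entry of the transition table, verifying that the corresponding rules reproduce exactly one machine step.

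Granting the configuration-reachability lemma, the remaining two steps are routine. The step $\M{TM}\preceq\M{SRH}'$ rephrases acceptance as reaching a string that contains one of the halting states of $M$, which is exactly the existential condition defining $\M{SRH}'$ on the start configuration $q_0\ltape x\rtape$ with $z$ listing all halting states. The step $\M{SRH}'\preceq\M{SRH}$ then eliminates the set $z$ of target symbols: introducing a fresh $\#$ and adjoining a rule $a/\#$ for each $a\in z$ makes reaching any symbol of $z$ equivalent to reaching the single symbol $\#$, with the freshness lemmas ensuring $\#$ does not clash with the existing alphabet. Feeding these two reductions into the transitivity step above completes the proof.
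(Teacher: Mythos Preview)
Your proposal is correct and matches the paper's approach exactly: the theorem is obtained by composing the two preceding reduction lemmas ($\M{TM}\preceq\M{SRH}'$ and $\M{SRH}'\preceq\M{SRH}$) via transitivity, with the configuration-reachability lemma doing the heavy lifting underneath the first of these. Your assessment of where the real work lies and why the remaining steps are routine is also accurate.
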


\section{Discussion}

We have formalised and verified a number of computational reductions
to and from the Post correspondence problem based on Coq's type
theory.  Our goal was to come up with a development as elegant as
possible.  Realising the design presented in this paper in Coq yields
an interesting exercise practising the verification of list-processing
functions.  If the intermediate lemmas are hidden and just the
reductions and accompanying correctness statements are given, the
exercise gains difficulty since the correctness proofs for the
reductions $\M{SR}\preceq\M{MPCP}\preceq\M{PCP}$ require the invention
of general enough inductive invariants
(Lemmas~\ref{lem-SRMPCP-trans1}, \ref{lem-SRMPCP-trans2},
\ref{lem-MPCP-PCP-trans1},~\ref{lem-MPCP-PCP-trans2}).  To our
surprise, we could not find rigorous correctness proofs for the
reductions $\M{TM}\preceq\M{SR}\preceq\M{MPCP}\preceq\M{PCP}$ in the
literature (e.g,~\cite{hopcroft,davis,sipser}).  Teaching these
reductions without rigorous correctness proofs in theoretical computer
science classes seems bad practice.  As the paper shows, elegant and
rigorous correctness proofs using techniques generally applicable in
program verification are available.

The ideas for the reductions
$\M{TM} \preceq \M{SRH}\preceq\M{SR}\preceq\M{MPCP}\preceq\M{PCP}$
are taken from Hopcroft et al.~\cite{hopcroft}.
They give a monolithic reduction of
the halting problem for Turing machines to MPCP.
The decomposition $\M{TM}\preceq\M{SRH}\preceq\M{SR}\preceq\M{MPCP}$
is novel.
Davis et al.~\cite{davis} give
a monolithic reduction $\M{SR}\preceq\M{PCP}$
based on different ideas.
The idea for the reduction $\M{PCP}\preceq\M{CFP}$
is from Hesselink~\cite{hesselink},
and the idea for the reduction $\M{PCP}\preceq\M{CFI}$
appears in Hopcroft et al.~\cite{hopcroft}.

There are several design choices
we faced when formalising the material
presented in this paper.
\begin{enumerate}
\item We decided to formalise PCP without
  making use of the positions of the cards in the list $P$.
  Most presentations in the literature
  (e.g.,~\cite{hopcroft,sipser})
  follow Post's original paper~\cite{emilpost}
  in using positions (i.e., indices) rather than cards in matches.
  An exception is Davis et al.~\cite{davis}.
  We think formulating PCP with positions
  is an unnecessary complication.
\item We decided to represent symbols as numbers rather
  than elements of finite types serving as alphabets.
  Working with implicit alphabets represented as lists
  rather than explicit alphabets represented as
  finite types saves bureaucracy.
\item We decided to work with Post grammars (inspired by
  Hesselink~\cite{hesselink}) rather than general context-free
  grammars since Post grammars sharpen the result and enjoy a
  particularly simple formalisation.  In the Coq development, we show
  that \setCoqFilename{Post_CFG}\coqlink[reduce_grammars]{Post
    grammars are an instance of context-free grammars}.
\end{enumerate}

Furthermore, we decided to put the focus of this paper on the elegant
reductions and not to cover Turing machines in detail. While being a
wide-spread model of computation, even the concrete formal definition
of Turing machines contains dozens of details, all of them not
interesting from a mathematical perspective.

The Coq development
verifying the results of sections 3 to 7
consists of about 850 lines
of which about one third realises specifications.
The reduction
$\M{SR}\preceq\M{SRH}$
takes 70 lines,
${\M{SR}\preceq\M{MPCP}}$
takes 105 lines,
$\M{MPCP}\preceq\M{PCP}$
takes 206 lines,
$\M{PCP}\preceq\M{CFP}$
takes 60 lines,
and $\M{PCP}\preceq\M{CFI}$
takes 107 lines.
\setCoqFilename{singleTM}
The reduction $\M{TM} \preceq \M{SRH}$ takes 610 lines, 230 of them
specification, plus a definition of Turing machines taking 291 lines.

\subsection*{Future Work}

Undecidability proofs for logics are often done by reductions from PCP
or related tiling problems.  We thus want to use our work as a
stepping stone to build a library of reductions which can be used to
verify more undecidability proofs.  We want to reduce PCP to the
halting problem of Minsky machines to prove the undecidability of
intuitionistic linear logic~\cite{DLW}.  Another possible step would
be to reduce PCP to validity for first-order logic~\cite{Church},
following the reduction from e.g.~\cite{manna}.  Many other
undecidability proofs are also done by direct reductions from PCP,
like the intersection problem for two-way-automata~\cite{RabinScott},
unification in third-order logic~\cite{Huet}, typability in the
$\lambda\Pi$-calculus~\cite{Dowek}, satisfiability for more applied
logics like HyperLTL~\cite{hahn}, or decision problems of first order
theories~\cite{Treinen}.

In this paper, we gave reductions directly as functions in Coq instead
of appealing to a concrete model of computation.  Writing down
concrete Turing machines computing the reductions is possible in
principle, but would be very tedious and distract from the elegant
arguments our proofs are based on.

In previous work~\cite{forster} we studied an explicit model of
computation based on a weak call-by-value calculus L in Coq.
L would allow an implementation of all reduction functions
without much overhead, which would also formally establish the
computability of all reductions. 

Moreover, it should be straightforward to reduce PCP to the
termination problem for L. Reducing the termination problem of L to TM
would take considerable effort. Together, the two reductions would
close the loop and verify the computational equivalence of TM, SRH,
SR, PCP, and the termination problem for L. Both reducing PCP to L and
implementing all reductions in L is an exercise in the verification of
deeply embedded functional programs, and orthogonal in the necessary
methods to the work presented in this paper.
 
\bibliographystyle{plain}
\bibliography{references_clean}

\begin{thebibliography}{10}

\bibitem{aspertimulti}
Andrea Asperti and Wilmer Ricciotti.
\newblock A formalization of multi-tape {T}uring machines.
\newblock {\em Theoretical Computer Science}, 603:23--42, 2015.

\bibitem{Church}
Alonzo Church.
\newblock A note on the {E}ntscheidungsproblem.
\newblock {\em J. Symb. Log.}, 1(1):40--41, 1936.

\bibitem{davis}
Martin~D. Davis, Ron Sigal, and Elaine~J. Weyuker.
\newblock {\em Computability, Complexity, and Languages: Fundamentals of
  Theoretical Computer Science}.
\newblock Academic Press, 2nd edition, 1994.

\bibitem{Dowek}
Gilles Dowek.
\newblock The undecidability of typability in the lambda-pi-calculus.
\newblock In {\em International Conference on Typed Lambda Calculi and
  Applications}, pages 139--145. Springer, 1993.

\bibitem{hahn}
Bernd Finkbeiner and Christopher Hahn.
\newblock Deciding hyperproperties.
\newblock In {\em {CONCUR} 2016}, pages 13:1--13:14, 2016.

\bibitem{forster}
Yannick Forster and Gert Smolka.
\newblock Weak call-by-value lambda calculus as a model of computation in
  {C}oq.
\newblock In {\em {ITP} 2017}, pages 189--206. Springer, LNCS 10499, 2017.

\bibitem{edith}
Edith Heiter.
\newblock {U}ndecidability of the {P}ost correspondence problem in {C}oq.
\newblock Bachelor's Thesis, Saarland University,
  \url{https://www.ps.uni-saarland.de/~heiter/bachelor.php}, 2017.

\bibitem{hesselink}
Wim~H. Hesselink.
\newblock {P}ost's correspondence problem and the undecidability of
  context-free intersection.
\newblock Manuscript, University of Groningen,
  \url{http://wimhesselink.nl/pub/whh513.pdf}, 2015.

\bibitem{hopcroft}
John~E. Hopcroft, Rajeev Motwani, and Jeffrey~D. Ullman.
\newblock {\em Introduction to Automata Theory, Languages, and Computation}.
\newblock Addison-Wesley, 3rd edition, 2006.

\bibitem{Huet}
Gerard~P. Huet.
\newblock The undecidability of unification in third order logic.
\newblock {\em Information and control}, 22(3):257--267, 1973.

\bibitem{DLW}
Dominique Larchey-Wendling and Didier Galmiche.
\newblock The undecidability of boolean {BI} through phase semantics.
\newblock In {\em LICS 2010}, pages 140--149. IEEE, 2010.

\bibitem{manna}
Zohar Manna.
\newblock {\em Mathematical theory of computation}.
\newblock Dover Publications, Incorporated, 2003.

\bibitem{emilpost}
Emil~L. Post.
\newblock A variant of a recursively unsolvable problem.
\newblock {\em Bulletin of the American Mathematical Society}, 52(4):264--268,
  1946.

\bibitem{RabinScott}
Michael~O. Rabin and Dana Scott.
\newblock Finite automata and their decision problems.
\newblock {\em IBM journal of research and development}, 3(2):114--125, 1959.

\bibitem{sipser}
Michael Sipser.
\newblock {\em Introduction to the Theory of Computation}.
\newblock Cengage Learning, {I}nternational edition, 2012.

\bibitem{Coq}
{The Coq Proof Assistant}.
\newblock \url{http://coq.inria.fr}, 2017.

\bibitem{Treinen}
Ralf Treinen.
\newblock A new method for undecidability proofs of first order theories.
\newblock {\em Journal of Symbolic Computation}, 14(5):437--457, 1992.

\end{thebibliography}

\end{document}